\newtheorem{theorem}{Theorem}
\newtheorem{corollary}[theorem]{Corollary}
\newtheorem{definition}{Definition}
\title{Identifiability of linear stochastic state-space models with application to ecology}
\author{Fr\'ed\'eric Barraquand and Julien Gibaud$^*$
 \\ \\
Institute of Mathematics of Bordeaux, University of Bordeaux,\\ CNRS, Bordeaux INP, Talence, France\\}
\date{}
\begin{document}

\maketitle


\begin{abstract}
 State-space models are dynamical systems defined by a latent and an observed process. In ecology, stochastic state-space models in discrete time are most often used to describe the imperfectly observed dynamics of population sizes or animal movement. However, several studies have observed identifiability issues when state-space models are fitted to simulated or real data, and it is not currently clear whether those are due to data limitations or more fundamental model non-identifiability. To investigate such theoretical identifiability, a suitable exhaustive summary is required, defined as a vector of parameter combinations which fully determines the model. Previous work on exhaustive summaries has used expectations of the stochastic process, so that noise parameters are unaccounted for. In this paper, we build an exhaustive summary using the spectral density of the observed process, which fully accounts for all mean and variance parameters. This diagnostic is applied to contrasted ecological models and we show that they are generally theoretically identifiable, unless some model compartements are unobserved. This suggest that issues encountered while fitting models are mostly due to practical identifiability.
\end{abstract}


\noindent \textbf{Key words:} Ecological modeling; Exhaustive summaries; Identifiability; Spectral density; State-space models\\

\noindent  {\small  $^*$ Correspondence to julien.gibaud@math.u-bordeaux.fr}

\thispagestyle{empty}
 
\newpage

\section{Introduction}

State-Space Models (SSMs) are deterministic or stochastic dynamical systems defined by two processes. The state process, which is not observed directly, models the transformation of the states over time. By contrast, the observation process describes the relationship between current states and the observables on which model fitting and prediction are based \citep{aoki2013state}. 
Ecology and related disciplines such as fishery sciences, conservation and other environmental fields frequently use stochastic SSMs to represent the imperfectly observed dynamics of population sizes \citep{Ives2003estimating} or animal movement \citep{jonsen2005robust}. In a systematic review, \citet{AugerMethe2021Aguide} explore the different aspects of state-space modeling in ecology, and remark the frequent identifiability issues encountered by practitioners. Indeed, several simulation-based studies, such as \citet{valpine2005State}, \citet{knape2008estimability} or \citet{AugerMethe2016state} have demonstrated practical identifiability issues in ecological SSMs.  A main objective of the present paper is to determine which of the observed issues are due to theoretical non-identifiability. 

Tackling (non)identifiability issues requires first a definition of identifiability. Let us consider that a model with $p$ parameters $\bm{\theta}=(\theta_1,\dots,\theta_p)$ can be represented by a suitable function of parameters $\bm{\theta}\mapsto M(\bm{\theta})$. For instance, for a parametric statistical model, this function can be the (log-)likelihood. A model is then said to be globally identifiable if the function $M$ is injective for all the parameters, i.e $M(\bm{\theta}_1)=M(\bm{\theta}_2)$ implies that $\bm{\theta}_1=\bm{\theta}_2$ for all $\bm{\theta}_1$ and $\bm{\theta}_2$ in the parameter space \citep{cole2010determining}. If there exists an open neighbourhood $\Omega$ of any $\bm{\theta}$ such that the function $M$ is injective, the model is locally identifiable.
In a linear-time-invariant SSM context, identifiability can be determined through several methods. Assuming that the initial condition is at rest ($\mathbf{x}_0 = \bf{0}$), the dynamical systems whose parameters are uniquely given by the impulse response or the transfer function are identifiable \citep{bellman1970on}. Additionally, \citet{grewal1976identifiability} and \citet{thowsen1978identifiability} show that the model is identifiable if the rank of the Jacobian of the mapping from the parameter space into the so-called Markov parameters is equal to the number of parameters. This is an alternative route to proving identifiability than using transfer functions. Yet another way of assessing SSM identifiability consists in using the infinite set of equations equating the successive derivatives of the observation with a constant. If the system has an unique solution, we have a sufficient condition for identifiability. This approach is called the Taylor series expansion \citep{pohjanpalo1978system}. An early review of these methods is provided by \citet{godfrey1985identifiability}, with some examples. 
Ecology and related disciplines producing parameter-rich models for complex systems have been particularly interested in knowing whether all or part of SSM parameters can be estimated \citep[][Chapter 9]{cole2020parameter}. To that aim, \citet{cole2016parameter} propose to investigate the identifiability of SSMs by means of exhaustive summaries, which are simplified representations of the model that can be obtained from the multiple techniques listed above.  

An exhaustive summary $\bm{\kappa}\in \mathbb{R}^n$ is a vector of parameter combinations which fully determines the model, i.e. $\bm{\kappa}(\bm{\theta}_1)=\bm{\kappa}(\bm{\theta}_2)$ if and only if $M(\bm{\theta}_1)=M(\bm{\theta}_2)$ for all $\bm{\theta}_1$ and $\bm{\theta}_2$ in the parameter space. From these definitions, \citet[][Theorem 1]{cole2010determining}  show that a model is identifiable (locally or globally) if the exhaustive summary is injective for the parameters. The non-identifiability of the model occurs when the model is parameter redundant \citep{catchpole1997detecting}. A model is parameter redundant if $\bm{\kappa}(\bm{\theta})$ can be expressed in terms of a smaller parameter vector $\bm{\beta}=(\beta_1,\dots,\beta_q)$, where $q<p$ and $\bm{\beta}=f(\bm{\theta})$ for a given function $f$. \citet[][Theorem 2a]{cole2010determining} show that a model is parameter redundant if the rank of the derivative matrix $\partial \bm{\kappa}/\partial \bm{\theta}$ is lower than the number of parameters. For deterministic continuous-time SSMs, an exhaustive summary based on either the transfer function or the Taylor series expansion can be built \citep[][Section 7.2]{cole2020parameter}. However, SSMs come in various flavours and ecologists (as well as many applied statisticians) are mostly interested in stochastic and discrete-time versions of SSMs. 
However, building a good exhaustive summary for a stochastic SSM is not completely straightforward based on the current literature. The strategy adopted by \citet{cole2016parameter} was to create exhaustive summaries based on the expectations of the stochastic process. This expansion technique consists in calculating the successive expectations of the observables at each time-step. To supplement the exhaustive summary by variance terms, \citet[][Appendix A.4]{cole2016parameter} proposed to expand the variance as well as the expectation.
In population dynamics, recent works have applied this method to demographic models \citep{polansky2020improving, aldrin2021caveats, polansky2023combining}.
The two approaches, working solely with expectations of the stochastic processes, as well as adding variance of observations to construct exhaustive summaries, have limitations when there are process and observation error parameters.  The first approach does not allow to take into account the possible dependency between variance parameters and (deterministic) process parameters, since it neglects variance altogether. Whilst as observed by \citet{knape2008estimability} on simulated data, the dependency between variance and non-variance parameters influences the (un)identifiability of SSMs, even simple linear ones. The second approach, adding variances to expectations of the observations, seems sensible but has not been justified mathematically---in the discussion and Supplementary Material, we show that the approach is partially justified but not exact in general, as some terms can be missing from the exhaustive summaries thus constructed. To rigourously evaluate the identifiability of ecological SSMs, a fully stochastic approach is therefore needed. 

Inspired by the frequent use of Fourier techniques in the analysis of stochastic processes, we sought to move from the deterministic transfer function approach to identifiability assessment to a stochastic equivalent. We show that the output spectral density plays a key role in stochastic SSM identifiability assessment, building upon early work by \citet{glover1974parametrizations}. This allows us to define new exhaustive summaries for fully stochastic models, without resorting to a deterministic approximation as previously done in the ecological statistics literature. 
The new exhaustive summary show that usual ecological models are often theoretically identifiable, suggesting that frequent SSM estimation problems are due to practical rather than theoretical identifiability issues, though we also show that several reasonable variants of common models can be theoretically unidentifiable.

The paper is organized as follows. In Section \ref{Definition}, we recall how stochastic SSMs are defined. Section \ref{MainTheorem} presents the new exhaustive summary built on the output spectral density. Section \ref{Application} then goes through contrasted applications to ecological models that illustrate the interest of the proposed exhaustive summary. Finally, Section \ref{Discussion} sums up what we learned about variance parameter identifiability issues and improvements provided by the new exhaustive summary, together with a discussion of some limitations of the present study.

\section{Definition of state-space models}
\label{Definition}

State-space models (SSMs) are dynamical systems that are continuous or discrete in time, and can be deterministic or stochastic. In this paper, we consider the linear-time-invariant (LTI) discrete-time Gaussian stochastic SSM defined by two processes. The state equation
\begin{equation}
\label{StateEquation}
    \bm{x}_{t+1}=\bm{Ax}_t+\bm{Bu}_t
\end{equation}
gives the sequence $(\bm{x}_t)$ of the true $N$ latent states while the observation equation
\begin{equation}
\label{ObservationEquation}
    \bm{y}_t=\bm{Cx}_t+\bm{Du}_t
\end{equation}
leads to the output sequence $(\bm{y}_t)$ of the observations. The input sequence $(\bm{u}_t)$ is assumed to be a white noise process, such that for all $t \geq 0$, $\bm{u}_t\sim \mathcal{N}(\bm{0}, \bm{Q})$. The vector $\bm{u}_t$ encapsulates both measurement and process errors (the dimension of $\bm{u}_t$ can be larger than that of $\bm{x}_t$), so that independence between observation and process errors is deal by choosing carefully matrices $\bm{B}$ and $\bm{D}$ (see Section \ref{Application}). The matrices $\bm{A}\in \mathbb{R}^{N\times N}$, $\bm{B}\in \mathbb{R}^{N\times P}$, $\bm{C}\in \mathbb{R}^{M\times N}$, $\bm{D}\in \mathbb{R}^{M\times P}$ and $\bm{Q}\in \mathbb{R}^{P\times P}$ contain all the parameters $\bm{\theta}$ to estimate in the model.
As assumed by \citet{glover1974parametrizations}, in this paper the output $(\bm{y}_t)$ is a stationary process (i.e. all the eigenvalues of $\bm{A}$ strictly belong to unit sphere) and the system has reached steady state when the observations begin.
Originally used for continuous time, the transfer function can also used to represent discrete-time dynamic systems \citep{bellman1970on}. Based on the discrete-time Fourier transform of $(\bm{y}_t)$ defined as
\begin{equation}
   \bm{Y}(i\omega)=\mathcal{F}\{(\bm{y}_t)\}(i\omega)=\sum_{t=-\infty}^{+\infty}\bm{y}_t\exp(-i\omega t),\quad i\omega\in i\mathbb{R}, 
\end{equation}
and assuming that the initial condition is at rest (i.e. $\bm{x}_0=\bm{0}$), the transfer function writes
\begin{equation}
\label{TransferFunction}
    \bm{H}_s(\bm{\theta})=\bm{C}(s\bm{I}-\bm{A})^{-1}\bm{B}+\bm{D}, \quad \text{with} \quad s=\exp(i\omega).
\end{equation}
In absence of variance parameters, \citet{cole2016parameter} use this expression to build a suitable exhaustive summary. However, if variance parameters are involved in the model, a function of all the parameters need to be exploited. 
This function is provided by the output spectral density, which is then given by
\begin{equation}
    \bm{S}(\bm{\theta})=\mathbb{E}\left[\bm{Y}(i\omega)\bm{Y}(-i\omega)^{\top}\right]=\bm{H}_{s}(\bm{\theta})\bm{Q}\bm{H}_{1/s}^{\top}(\bm{\theta}).
\end{equation}
Inspired by \citet[][Section 2.4 and Section 6.6]{sarkka2019applied}, we give the derivations of the transfer function and the output spectral density in Supplementary Material \ref{derivations}.

\section{The output spectral density exhaustive summary}
\label{MainTheorem}

The output spectral density gives a complete description of the stochastic process \citep{glover1974parametrizations}. We can therefore adopt the following definition
\begin{definition}
    The stochastic linear-time-invariant discrete-time  SSM is said to be locally identifiable if 
    \begin{equation}
        \bm{S}(\bm{\theta}_1)=\bm{S}(\bm{\theta}_2)\implies \bm{\theta}_1=\bm{\theta}_2, \quad \forall \bm{\theta}_1,\bm{\theta}_2\in \Omega,
    \end{equation}
    where $\Omega\subset \mathbb{R}^p$ is an open set.
\end{definition}
Using this definition, we have the main result of this paper.
\begin{theorem}
\label{OSDES}
    For stochastic linear-time-invariant discrete-time SSMs, the components of the output spectral density have the form
    \begin{equation}
        [\bm{S}(\bm{\theta})]_{ij}=\dfrac{\sum_{n=0}^{2N}a_n(\bm{\theta})s^{n}}{\sum_{n=0}^{2N}b_n(\bm{\theta})s^{n}},
    \end{equation}
    and the output spectral density exhaustive summary $\bm{\kappa}(\bm{\theta})$ consists of the non-constant coefficients $a_n(\bm{\theta})$ and $b_n(\bm{\theta})$ of the powers of $s$ in the numerators and denominators of $\bm{S}(\bm{\theta})$.
\end{theorem}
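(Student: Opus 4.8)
The plan is to establish the statement in two stages: first, that each entry of $\bm{S}(\bm{\theta})$ is a rational function of $s$ with numerator and denominator of degree at most $2N$; and second, that the collection of non-constant coefficients thereby produced is an exhaustive summary, i.e.\ that $\bm{\kappa}(\bm{\theta}_1)=\bm{\kappa}(\bm{\theta}_2)$ if and only if $\bm{S}(\bm{\theta}_1)=\bm{S}(\bm{\theta}_2)$.

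For the first stage I would insert the adjugate formula $(s\bm{I}-\bm{A})^{-1}=\mathrm{adj}(s\bm{I}-\bm{A})/\det(s\bm{I}-\bm{A})$ into the transfer function~\eqref{TransferFunction}. The denominator $\det(s\bm{I}-\bm{A})$ is the characteristic polynomial of $\bm{A}$, monic of degree exactly $N$, while each entry of $\mathrm{adj}(s\bm{I}-\bm{A})$ is a polynomial in $s$ of degree at most $N-1$; hence $\bm{H}_s(\bm{\theta})=\big[\bm{C}\,\mathrm{adj}(s\bm{I}-\bm{A})\,\bm{B}+\bm{D}\det(s\bm{I}-\bm{A})\big]/\det(s\bm{I}-\bm{A})$ has entries that are ratios of a polynomial of degree at most $N$ over one of degree $N$. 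For $\bm{H}_{1/s}^T$ I would substitute $s\mapsto 1/s$ and clear powers of $s$ via the identity $s^{N}\det\!\big((1/s)\bm{I}-\bm{A}\big)=\det(\bm{I}-s\bm{A})$, which turns each entry into a polynomial of degree at most $N$ divided by $\det(\bm{I}-s\bm{A})$, itself of degree at most $N$ with constant term $1$. Multiplying out $\bm{S}=\bm{H}_s\bm{Q}\bm{H}_{1/s}^T$ entrywise then gives the common denominator $\det(s\bm{I}-\bm{A})\det(\bm{I}-s\bm{A})$, of degree at most $2N$, and a numerator that is a $\bm{Q}$-weighted sum of products of two polynomials of degree at most $N$, hence of degree at most $2N$. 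This is the claimed form, and the computation also shows every $a_n(\bm{\theta})$ and $b_n(\bm{\theta})$ is a polynomial — in particular a smooth function — of the entries of $\bm{A},\bm{B},\bm{C},\bm{D},\bm{Q}$, which is what the derivative-matrix criterion of \citet{cole2010determining} later requires.

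For the second stage, the implication $\bm{\kappa}(\bm{\theta}_1)=\bm{\kappa}(\bm{\theta}_2)\Rightarrow\bm{S}(\bm{\theta}_1)=\bm{S}(\bm{\theta}_2)$ is immediate: knowing all the coefficients — the non-constant ones being those gathered in $\bm{\kappa}$, the constant ones being fixed once and for all by the model structure — determines each rational function and hence $\bm{S}(\bm{\theta})$ upon setting $s=\exp(i\omega)$. Discarding the constant coefficients from $\bm{\kappa}$ is harmless because they take the same value for every $\bm{\theta}$ and thus influence neither the injectivity of $\bm{\kappa}$ nor the rank of $\partial\bm{\kappa}/\partial\bm{\theta}$; this is the standard convention in the exhaustive-summary approach \citep{cole2016parameter}. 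For the converse, $\bm{S}(\bm{\theta})$ is a matrix of rational functions of $s$ and $\{\exp(i\omega):\omega\in\mathbb{R}\}$ is an infinite subset of $\mathbb{C}$, so equality of $\bm{S}(\bm{\theta}_1)$ and $\bm{S}(\bm{\theta}_2)$ on the unit circle forces equality of the corresponding rational functions on all of $\mathbb{C}$; identifying coefficients of the powers of $s$ in the common-denominator representation then yields $\bm{\kappa}(\bm{\theta}_1)=\bm{\kappa}(\bm{\theta}_2)$.

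The main obstacle is this last comparison of coefficients, which is only legitimate when the constructed numerator and denominator are already in lowest terms, i.e.\ when $\det(s\bm{I}-\bm{A})\det(\bm{I}-s\bm{A})$ shares no common factor with the numerators. This holds when the realization $(\bm{A},\bm{B},\bm{C},\bm{D})$ is minimal (reachable and observable), but not in general: a non-minimal realization conceals removable factors, and then the constructed coefficients may depend on parameters absent from $\bm{S}(\bm{\theta})$. I would handle this by restricting to minimal realizations — the relevant case for the ecological models of Section~\ref{Application} — and by appealing to the canonical-form analysis of \citet{glover1974parametrizations}; the degenerate cases are precisely the unobserved-compartment situations flagged elsewhere in the paper. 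Everything outside this point is routine bookkeeping with determinants and degrees.
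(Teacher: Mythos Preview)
Your first stage is essentially the paper's own argument: the paper writes $\bm{H}_s(\bm{\theta})=\bm{G}_s(\bm{\theta})/\chi_A(s)$ with $\chi_A$ the characteristic polynomial, forms $\bm{S}(\bm{\theta})=\bm{G}_s\bm{Q}\bm{G}_{1/s}^T/\big(\chi_A(s)\chi_A(1/s)\big)$, observes that numerator and denominator are degree-$N$ expressions in $s$ and $1/s$, and clears by $s^N/s^N$ --- exactly your adjugate computation in slightly different notation.

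Where you diverge is in the second stage. The paper disposes of the exhaustive-summary claim in a single sentence (``the polynomials being entirely identified by their coefficients''), which really only justifies the forward implication $\bm{\kappa}(\bm{\theta}_1)=\bm{\kappa}(\bm{\theta}_2)\Rightarrow\bm{S}(\bm{\theta}_1)=\bm{S}(\bm{\theta}_2)$. You go further: you invoke the identity principle for rational functions on the unit circle to argue the converse, and you correctly isolate the one genuine obstruction --- that coefficient comparison in the common-denominator representation is only legitimate when no cancellation occurs, which amounts to minimality of $(\bm{A},\bm{B},\bm{C},\bm{D})$. That caveat is real and the paper does not state it; your appeal to \citet{glover1974parametrizations} and your remark that the non-minimal case is precisely the unobserved-compartment situation of Section~\ref{InteractionSpecies} are both apt. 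So your argument follows the paper's route but is strictly more careful on the point the paper treats informally.
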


\begin{proof}[Proof of the theorem]
    From \citet[][Section 7.2.1]{cole2020parameter}, we know that the components of the transfer function have the form
    \begin{equation}
        [\bm{H}_s(\bm{\theta})]_{ij}=\dfrac{\sum_{n=0}^N \alpha_n(\bm{\theta})s^n}{\sum_{n=0}^N \beta_n(\bm{\theta})s^n}.  
    \end{equation}
    The formulation of the transfer function (\ref{TransferFunction}) implies that the denominator is the characteristic polynomial $\chi_{A}$ of $\bm{A}$ evaluated in $s$. The transfer function could be rewritten
    \begin{equation}
        \bm{H}_s(\bm{\theta})=\dfrac{1}{\chi_{A}(s)}\bm{G}_s(\bm{\theta}),\quad \text{where} \quad [\bm{G}_s(\bm{\theta})]_{ij}= \sum_{n=0}^N \alpha_n(\bm{\theta})s^n.   
    \end{equation}
    Since $s=\exp(i\omega)$ and the modulus of the eigenvalues of $\bm{A}$ strictly lower than 1, the transfer function is always defined. The output spectral density thus writes
    \begin{equation}
        \bm{S}(\bm{\theta})=\bm{H}_s(\bm{\theta})\bm{Q}\bm{H}_{1/s}^{\top}(\bm{\theta})=\dfrac{1}{\chi_{A}(s)\chi_{A}(1/s)}\bm{G}_s(\bm{\theta})\bm{Q}\bm{G}_{1/s}^{\top}(\bm{\theta}).
    \end{equation}
    The product $\bm{G}_s(\bm{\theta})\bm{Q}\bm{G}_{1/s}^{\top}(\bm{\theta})$ leads to a matrix where each component  is a sum of a polynomial of degree $N$ in $s$ and a polynomial of degree $N$ in $1/s$ since the matrix $\bm{Q}$ does not change the degree of the polynomials. Moreover, $\chi_{A}(s)\chi_{A}(1/s)$ is also a sum of polynomials. The components of the spectral density have then the form
    \begin{equation}
        [\bm{S}(\bm{\theta})]_{ij}=\dfrac{\alpha_0(\bm{\theta})+\sum_{n=1}^N\alpha_n(\bm{\theta})s^n+\sum_{n=1}^N\gamma_n(\bm{\theta})\frac{1}{s^n}}{\beta_0(\bm{\theta})+\sum_{n=1}^N\beta_n(\bm{\theta})s^n+\sum_{n=1}^N\delta_n(\bm{\theta})\frac{1}{s^n}}.
    \end{equation}
    Multiplying $\bm{S}(\bm{\theta})$ by $s^N/s^N$, the components of the spectral density have the expected form. As explained by \citet{bellman1970on} and \citet{godfrey1985identifiability} for the transfer function, if there are no common factors in these numerator and denominator polynomials, the coefficients are uniquely given by the spectral density. The rational function being entirely identified by their coefficients, after removing potential common factors the exhaustive summary encapsulates all the coefficients of all the polynomials of the output spectral density.
\end{proof}

Analogous result could be shown for the continuous counterpart of the stochastic SSMs. Supplementary Material \ref{ContinuousCounterpart} presents this alternative.

\section{Application to ecological models}
\label{Application}
In this section, we investigate the (non-)identifiability of different ecological SSMs through the symbolic differentiation approach. This approach consists in calculating the derivative matrix $\partial \bm{\kappa} / \partial \bm{\theta}=(\partial \kappa_j / \partial \theta_i)_{ij}$ for $i=1,\dots,p$ and $j=1,\dots,n$. Now, if the rank $r$ of the derivative matrix is lower than the number of parameters $p$, the model is said to be parameter redundant \citep{catchpole1997detecting}. On the other hand, if $r=p$, the model is full rank and the parameters are theoretically estimable. When a model is parameter redundant, we define the deficiency as $d=p - r$. The equation $\bm{\alpha}^{\top}(\partial \bm{\kappa} / \partial \bm{\theta})=\bm{0}$ gives $d$ vectors $(\bm{\alpha}_j, j=1,\dots,d)$ characterizing a basis of the null space of $(\partial \bm{\kappa} / \partial \bm{\theta})^{\top}$. In order to find the $r$ estimable parameters, we need to solve the system of linear first-order partial differential equations $\sum_{i=1}^p \alpha_{ij}(\partial f/\partial \theta_i)=0$, for $j=1,\dots,d$, where $f$ is an arbitrary function. All these results are detailed by \citet[][Theorem 2]{cole2010determining} and \citet[][Section 3.2]{cole2020parameter}.

The symbolic algebra of this approach is executed through the software application Maple 2024.1. The Maple files for the application are freely available following the link 
\url{https://github.com/julien-gibaud/IdentifiabilitySSM}.

\subsection{Population dynamics}
\label{PopulationDynamics}

We illustrate Theorem \ref{OSDES}  with a model derived from \citet{Besbeas2002Integrating} and already used in \citet{cole2020parameter} with exhaustive summaries based on process expectations. They create a discrete-time SSM for log-abundance data on the number of lapwings (\textit{Vanellus vanellus}). The state variable is $\bm{x}_t = (N_{1,t}, N_{a,t})^{\top}$, where $N_{1,t}$ is the number of juvenile birds and $N_{a,t}$ is the number of adult birds in year $t$. The underlying state equation describes the birth of new birds, survival from one year to the next and the process of moving from a juvenile bird to an adult bird, as following
\begin{equation}
    \bm{x}_{t+1}=\begin{pmatrix}
    0 & \rho \phi_1\\
    \phi_a & \phi_a
\end{pmatrix}\bm{x}_{t}+\bm{\varepsilon}_t,\quad \bm{\varepsilon}_t\sim \mathcal{N}(\bm{0},\sigma^2_{\varepsilon}\bm{I}_2)
\end{equation}
where $\rho$ denotes the annual productivity rate, $\phi_1$ is the probability that a bird survives its first year and $\phi_a$ is the annual survival probability for adult birds. As only adult birds are observed the observation equation is
\begin{equation}
    y_t=\begin{pmatrix}
    0 & 1
\end{pmatrix}\bm{x}_t+\eta_t,\quad \eta_t\sim \mathcal{N}(0,\sigma^2_{\eta}).
\end{equation}
The observation variable $y_t$ then represents the observed numbers of adult birds at time $t$. The vector of parameters is $\bm{\theta}=(\rho, \phi_1,\phi_a, \sigma^2_{\eta}, \sigma^2_{\varepsilon})$. Since the parameters $\rho$ and $\phi_1$ only appear as a product, it is clear that these parameters are confounded and the model not identifiable. This case study therefore aims at showing how the exhaustive summary behave in a case we know not to be identifiable. In order to obtain the state-space model similar to Equations (\ref{StateEquation}) and (\ref{ObservationEquation}), the matrices need to be 
\begin{equation}
\bm{A}=\begin{pmatrix}
    0 & \rho\phi_1\\
    \phi_a & \phi_a 
\end{pmatrix}, \bm{B}=\begin{pmatrix}
    1 & 0 & 0\\
    0 & 1 & 0
\end{pmatrix},\bm{C}=\begin{pmatrix}
    0 & 1
\end{pmatrix},\bm{D}=\begin{pmatrix}
    0 & 0 & 1
\end{pmatrix}
\end{equation}
and the input process is given by 
\begin{equation}
    \forall t\geq 0,\quad \bm{u}_t:= \begin{pmatrix}
    \bm{\varepsilon}_t\\
    \eta_t
    \end{pmatrix}\sim \mathcal{N}(\bm{0},\bm{Q}), \quad \text{with} \quad
    \bm{Q}=\begin{pmatrix}
    \sigma^2_{\varepsilon} & 0 & 0\\
    0 & \sigma^2_{\varepsilon} & 0\\
    0 & 0 & \sigma^2_{\eta}
    \end{pmatrix}.
\end{equation}
The coefficients of the output spectral density are given by 
\begin{equation}
    \begin{pmatrix}
    a_0(\bm{\theta})\\
    a_1(\bm{\theta})\\
    a_2(\bm{\theta})\\
    a_3(\bm{\theta})\\
    a_4(\bm{\theta})\\
    b_0(\bm{\theta})\\
    b_1(\bm{\theta})\\
    b_2(\bm{\theta})\\
    b_3(\bm{\theta})\\
    b_4(\bm{\theta})\\
    \end{pmatrix}=\begin{pmatrix}
    -\rho \phi_1\phi_a\sigma^2_{\eta}\\
    \rho\phi_1\phi_a^2\sigma^2_{\eta}-\phi_a\sigma^2_{\eta}\\
    \rho^2\phi_1^2\phi_a^2\sigma^2_{\eta}+\phi_a^2\sigma^2_{\eta}+\phi_a^2\sigma^2_{\varepsilon}+\sigma^2_{\eta}+\sigma^2_{\varepsilon}\\
    \rho\phi_1\phi_a^2\sigma^2_{\eta}-\phi_a\sigma^2_{\eta}\\
    -\rho \phi_1\phi_a\sigma^2_{\eta}\\
    -\rho \phi_1\phi_a\\
    \rho \phi_1\phi_a^2-\phi_a\\
    \rho^2\phi_1^2\phi_a^2+\phi_a^2+1\\
    \rho \phi_1\phi_a^2-\phi_a\\
    -\rho \phi_1\phi_a
    \end{pmatrix}.
\end{equation}
Removing the equal coefficients, the exhaustive summary writes
\begin{equation}
    \bm{\kappa}(\bm{\theta})=\begin{pmatrix}
    -\rho \phi_1\phi_a\sigma^2_{\eta}\\
    \rho\phi_1\phi_a^2\sigma^2_{\eta}-\phi_a\sigma^2_{\eta}\\
    \rho^2\phi_1^2\phi_a^2\sigma^2_{\eta}+\phi_a^2\sigma^2_{\eta}+\phi_a^2\sigma^2_{\varepsilon}+\sigma^2_{\eta}+\sigma^2_{\varepsilon}\\
    -\rho \phi_1\phi_a\\
    \rho \phi_1\phi_a^2-\phi_a\\
    \rho^2\phi_1^2\phi_a^2+\phi_a^2+1
    \end{pmatrix}.
\end{equation}
Now, we compute the derivative matrix 
\begin{align}
    &\dfrac{\partial \bm{\kappa}}{\partial \bm{\theta}} = \\& \begin{pmatrix}
    -\phi_1\phi_a\sigma^2_{\eta} & \phi_1\phi_a^2\sigma^2_{\eta}   & d_{13} & -\phi_1\phi_a & \phi_1\phi_a^2 & 2\rho\phi_1^2\phi_a^2\\
    -\rho \phi_a\sigma^2_{\eta}  & \rho\phi_a^2\sigma^2_{\eta}  & d_{23} & -\rho\phi_a & \rho\phi_a^2 & 2\rho^2\phi_1\phi_a^2\\
    -\rho \phi_1\sigma^2_{\eta}  & 2\rho\phi_1\phi_a\sigma^2_{\eta}-\sigma^2_{\eta}   & d_{33} & -\rho\phi_1 & 2\rho\phi_1\phi_a-1 & 2\rho^2\phi_1^2\phi_a+2\phi_a\\
    -\rho \phi_1\phi_a     & \rho\phi_1\phi_a^2-\phi_a  & d_{43} & 0 & 0 & 0\\
    0     & 0    & d_{53} & 0 & 0 & 0
    \end{pmatrix}, \nonumber
\end{align}
where $d_{13}=2\rho\phi_1^2\phi_a^2\sigma^2_{\eta}$, $d_{23}=2\rho^2\phi_1\phi_a^2\sigma^2_{\eta}$, $d_{33}=2\rho^2\phi_1^2\phi_a\sigma^2_{\eta} +  2\phi_a\sigma^2_{\eta}+2\phi_a\sigma^2_{\varepsilon}$, $d_{43}=\rho^2\phi_1^2\phi_a^2 +\phi_a^2 + 1$ and $d_{53}=\phi_a^2+1$. As the derivative matrix has rank $r=4$, but there are five parameters, this model is parameter redundant with deficiency $d=1$. Solving the equation $\bm{\alpha}^{\top} \partial \bm{\kappa}/\partial \bm{\theta}=\bm{0}$ gives
\begin{equation}
    \bm{\alpha}^{\top}=\begin{pmatrix}
    -\dfrac{\rho}{\phi_1} & 1 & 0 & 0 & 0
\end{pmatrix}.
\end{equation}
The position of the zeros indicates that the  parameters $\phi_a$, $\sigma^2_{\eta}$ and $\sigma^2_{\varepsilon}$ are individually identifiable. Solving the partial differential equation
\begin{equation}
    -\dfrac{\rho}{\phi_1}\dfrac{\partial f}{\partial \phi_1}+\dfrac{\partial f}{\partial \rho}=0
\end{equation}
gives $\beta_1= \rho\phi_1$, $\beta_2=\phi_a$, $\beta_3=\sigma^2_{\eta}$ and $\beta_4=\sigma^2_{\varepsilon}$ as the estimable parameters.

\subsection{The state-space AR(1) model}
\label{AR1}

The autoregressive model of order one in state-space form is classically used in ecology to model population dynamics \citep{Ives2003estimating, dennis2006estimating, knape2008estimability}.

\subsubsection{Model without interaction between state and observation processes}

\citet{AugerMethe2016state} consider the simple AR(1) model
\begin{equation}
  x_{t+1}=\rho x_t +\varepsilon_t, \quad \varepsilon_t \sim \mathcal{N}(0,\sigma^2_{\varepsilon})  
\end{equation}
with an observation equation
\begin{equation}
    y_t=x_t+\eta_t,\quad \eta_t \sim \mathcal{N}(0,\sigma^2_{\eta}).
\end{equation}
We thus have $A=\rho$, $B=(1\,\,0)$, $C=1$, $D=(0\,\,1)$ and
\begin{equation}
    \bm{Q}=\begin{pmatrix}
    \sigma^2_{\varepsilon} & 0\\
    0 & \sigma^2_{\eta}
\end{pmatrix}.
\end{equation}
The vector of parameters is $\bm{\theta}=(\rho, \sigma^2_{\eta}, \sigma^2_{\varepsilon})$. The output spectral density is given by
\begin{equation}
    \bm{S}(\bm{\theta})=\dfrac{-\rho\sigma^2_{\eta}s^2+(\rho^2\sigma^2_{\eta}+\sigma^2_{\eta}+\sigma^2_{\varepsilon})s-\rho\sigma^2_{\eta}}{-\rho s^2+( \rho^2+1)s-\rho}.
\end{equation}
Removing the equal coefficients of the output spectral density, the exhaustive summary writes
\begin{equation}
    \bm{\kappa}(\bm{\theta})=\begin{pmatrix}
    -\rho\sigma^2_{\eta}\\
    \rho^2\sigma^2_{\eta}+\sigma^2_{\eta}+\sigma^2_{\varepsilon}\\
    -\rho\\
    \rho^2+1
\end{pmatrix}.
\end{equation}
The derivative matrix
\begin{equation}
    \dfrac{\partial\bm{\kappa}}{\partial \bm{\theta}}=\begin{pmatrix}
    -\sigma^2_{\eta} & 2\rho\sigma^2_{\eta} & -1 & 2\rho\\
    -\rho & \rho^2+1 & 0 & 0\\
    0 & 1 & 0 & 0
\end{pmatrix}
\end{equation}
has rank $r=3$, and there are three parameters, this model is not parameter redundant.

\subsubsection{Model with interaction between processes}
We consider the above model with this time a non-diagonal
\begin{equation}
    \bm{Q}=\begin{pmatrix}
    \sigma^2_{\varepsilon} & \sigma_{\varepsilon,\eta}\\
    \sigma_{\varepsilon,\eta} & \sigma^2_{\eta}
\end{pmatrix},
\end{equation}
where $\sigma_{\varepsilon,\eta}$ models the interaction between the two processes. The vector of parameters becomes $\bm{\theta}=(\rho, \sigma^2_{\eta}, \sigma^2_{\varepsilon}, \sigma_{\varepsilon,\eta})$. The output spectral density is given by
\begin{equation}
    \bm{S}(\bm{\theta})=\dfrac{(-\rho\sigma^2_{\eta}+\sigma_{\varepsilon,\eta})s^2+(\rho^2\sigma^2_{\eta}-2\sigma_{\varepsilon,\eta}\rho+\sigma^2_{\eta}+\sigma^2_{\varepsilon})s-\rho\sigma^2_{\eta}+\sigma_{\varepsilon,\eta}}{-\rho s^2+( \rho^2+1)s-\rho}.
\end{equation} 
Finally, the exhaustive summary writes
\begin{equation}
    \bm{\kappa}(\bm{\theta})=\begin{pmatrix}
    -\rho\sigma^2_{\eta}+\sigma_{\varepsilon,\eta}\\
    \rho^2\sigma^2_{\eta}-2\sigma_{\varepsilon,\eta}\rho+\sigma^2_{\eta}+\sigma^2_{\varepsilon}\\
    -\rho\\
    \rho^2+1
\end{pmatrix}.
\end{equation}
The derivative matrix 
\begin{equation}
   \dfrac{\partial\bm{\kappa}}{\partial \bm{\theta}}=\begin{pmatrix}
    -\sigma^2_{\eta} & 2\rho\sigma^2_{\eta}-2\sigma_{\varepsilon,\eta} & -1 & 2\rho\\
    -\rho & \rho^2+1 & 0 & 0\\
    0 & 1 & 0 & 0\\
    1 & -2\rho & 0 & 0
\end{pmatrix} 
\end{equation}
has rank $r=3$, but there are four parameters, this model is parameter redundant with deficiency $d=1$. Solving the equation $\bm{\alpha}^{\top}\partial \bm{\kappa}/\partial \bm{\theta}=\bm{0}$ gives
\begin{equation}
    \bm{\alpha}^{\top}=\begin{pmatrix}
    0 & \dfrac{1}{\rho} & \dfrac{\rho^2-1}{\rho} & 1
\end{pmatrix}.
\end{equation}
Solving the partial differential equation
\begin{equation}
    \dfrac{1}{\rho}\dfrac{\partial f}{\partial \sigma^2_{\eta}}+\dfrac{\rho^2-1}{\rho}\dfrac{\partial f}{\partial \sigma^2_{\varepsilon}}+\dfrac{\partial f}{\partial \sigma_{\varepsilon,\eta}}=0
\end{equation}
gives $\beta_1=\rho$, $\beta_2=-\rho^2\sigma_{\eta}^2+\sigma^2_{\eta}+\sigma^2_{\varepsilon}$ and $\beta_3=-\rho\sigma_{\eta}^2+\sigma_{\varepsilon,\eta}$ as estimable parameters.

\subsection{Animal movement}
\label{AnimalMovement}

As a third example, \citet{AugerMethe2016state} propose to model the movement of polar bears (\textit{Ursus maritimus}) by taking into account the sea ice drift. The two-dimensional observed variable $\bm{y}_t$ measures the daily displacement of the polar bear based on the GPS collar, the state variable $\bm{x}_t$ is the active movement of the polar bear while the sea ice drift is encoded by the standardized covariate $\bm{s}_t$. The state equation writes
\begin{equation}
    \bm{x}_{t+1}=\begin{pmatrix}
    \rho_u & 0\\
    0 & \rho_v
\end{pmatrix}\bm{x}_t+\bm{\varepsilon}_t,\quad \bm{\varepsilon}_t\sim \mathcal{N}(\bm{0},\bm{Q}_1) \quad \text{with} \quad \bm{Q}_1=\begin{pmatrix}
    \sigma^2_{\varepsilon, u} & 0\\
    0 & \sigma^2_{\varepsilon, v}
\end{pmatrix}
\end{equation}
where the parameters $\rho_u$ and $\rho_v$ are degree of autocorrelation in the random walk. The observation equation is given by
\begin{equation}
    \bm{y}_t=\bm{x}_t+\bm{s}_t+\bm{\eta}_t, \quad \bm{\eta}_t\sim \mathcal{N}(\bm{0},\bm{Q}_2)\quad \text{with} \quad \bm{Q}_2=\begin{pmatrix}
    \sigma^2_{\eta, u} & 0\\
    0 & \sigma^2_{\eta, v}
\end{pmatrix}.
\end{equation}
The vector of parameters is $\bm{\theta}=(\rho_u, \rho_v, \sigma^2_{\eta,u}, \sigma^2_{\eta,v}, \sigma^2_{\varepsilon,u}, \sigma^2_{\varepsilon,v})$. As above, we rewrite this model through the matrices
\[\bm{A}=\begin{pmatrix}
    \rho_u & 0\\
    0 & \rho_v
\end{pmatrix}, \bm{B}=\begin{pmatrix}
    1 & 0 & 0 & 0 & 0 & 0\\
    0 & 1 & 0 & 0 & 0 & 0
\end{pmatrix}, \bm{C}=\begin{pmatrix}
    1 & 0\\
    0 & 1
\end{pmatrix},\bm{D}=\begin{pmatrix}
    0 & 0 & 1 & 0 & 1 & 0\\
    0 & 0 & 0 & 1 & 0 & 1
\end{pmatrix}.\]
Since the formal spectral density approach does not include covariates, $\bm{s}_t$ is considered as an additive noise and the covariate is ``internalized''. The input process is thus defined by
\begin{equation}
    \forall t\geq 0,\quad \bm{u}_t:= \begin{pmatrix}
    \bm{\varepsilon}_t\\
    \bm{s}_t\\
    \bm{\eta}_t
\end{pmatrix}\sim \mathcal{N}(\bm{0},\bm{Q}), \quad \text{with} \quad
\bm{Q}=\begin{pmatrix}
    \bm{Q}_1 & \bm{0} & \bm{0}\\
    \bm{0}  & \bm{I}_2 & \bm{0}\\
    \bm{0} & \bm{0} & \bm{Q}_2
\end{pmatrix}.
\end{equation}
Removing the equal coefficients in the output spectral, the exhaustive summary writes
\begin{equation}
    \bm{\kappa}(\bm{\theta})=\begin{pmatrix}
    \rho_u\sigma^2_{\eta,u}+\rho_u\\
    -\rho_u^2\sigma^2_{\eta,u}-\rho_u^2-\sigma^2_{\eta,u}-\sigma^2_{\varepsilon,u}-1\\
    \rho_u\\
    -\rho_u^2-1\\
    \rho_v\sigma^2_{\eta,v}+\rho_v\\
    -\rho_v^2\sigma^2_{\eta,v}-\rho_v^2-\sigma^2_{\eta,v}-\sigma^2_{\varepsilon,v}-1\\
    \rho_v\\
    -\rho_v^2-1\\
    \end{pmatrix}.
\end{equation}
Now, we compute the derivative matrix 
\begin{align}
    & \dfrac{\partial \bm{\kappa}}{\partial \bm{\theta}} = \\&\begin{pmatrix}
    \sigma^2_{\eta,u}+1 & -2\rho_u\sigma^2_{\eta,u}-2\rho_u & 1 & -2\rho_u & 0 & 0 & 0 & 0\\
    0  & 0 & 0 & 0 & \sigma^2_{\eta,v}+1 & -2\rho_u\sigma^2_{\eta,u}-2\rho_u & 1 & -2\rho_v\\
    \rho_u & -\rho_u^2-1 & 0 & 0 & 0 & 0 & 0 & 0\\
    0 & 0 & 0 & 0 & \rho_v & -\rho_v^2-1 & 0 & 0\\
    0 & -1 & 0 & 0 & 0 & 0 & 0 & 0\\
    0 & 0 & 0 & 0 & 0 & -1 & 0 & 0
    \end{pmatrix} \nonumber
\end{align}
and we calculate the rank. As the derivative matrix has rank $r=6$, and there are six parameters, this model is not parameter redundant.

\subsection{Interaction between species}
\label{InteractionSpecies}

Inspired by the Multivariate AutoRegressive (MAR(1), also known as VAR(1)) model presented by \citet{Ives2003estimating} to estimate interaction between species, we consider the following state process
\begin{equation}
    \bm{x}_{t+1}=\begin{pmatrix}
    a_{11} & a_{12}\\
    a_{21} & a_{22}
\end{pmatrix}\bm{x}_t+\bm{\varepsilon}_t,\quad \bm{\varepsilon}_t\sim\mathcal{N}(\bm{0},\sigma^2_{\varepsilon}\bm{I}_2),
\end{equation}
where $\bm{x}_t$ is the log-transformed population abundance and the elements $a_{ij}$ give the effect of the (log-)abundance of species $j$ on the population growth rate of species $i$. In order to have a SSM, we assume that only the first species is observed, with some observation error, that is
\begin{equation}
    y_t=\begin{pmatrix}
    1 & 0
\end{pmatrix}\bm{x}_t+\eta_t,\quad \eta_t\sim \mathcal{N}(0,\sigma^2_{\eta}).
\end{equation}
The vector of parameters is $\bm{\theta}=(a_{11},a_{12},a_{21},a_{22},\sigma^2_{\eta}, \sigma^2_{\varepsilon})$. 
Removing the equal coefficients of the output spectral density, the exhaustive summary writes
\begin{equation}
   \bm{\kappa}(\bm{\theta})=   
   \begin{pmatrix}
    (a_{11}a_{22}-a_{12}a_{21})\sigma^2_{\eta}\\
    -(a_{11}+a_{22})(a_{11}a_{22}-a_{12}a_{21}+1)\sigma^2_{\eta}-a_{22}\sigma^2_{\varepsilon}\\
    \kappa_3(\bm{\theta})\\
    a_{11}a_{22}-a_{12}a_{21}\\
    -(a_{11}+a_{22})(a_{11}a_{22}-a_{12}a_{21}+1)\\
    (a_{11}a_{22}-a_{12}a_{21})^2+(a_{11}+a_{22})^2+1
\end{pmatrix}, 
\end{equation}
with $\kappa_3(\bm{\theta})=((a_{11}^2+1)a_{22}^2+(-2a_{12}a_{21}+2)a_{11}a_{22}+a_{12}^2a_{21}^2+a_{11}^2+1)\sigma^2_{\eta}+(a_{12}^2+a_{22}^2+1)\sigma^2_{\varepsilon}$. The derivative matrix $\partial \bm{\kappa}/\partial \bm{\theta}$ has rank $r=5$, but there are six parameters, this model is parameter redundant with deficiency $d=1$. Solving the equation $\bm{\alpha}^{\top} \partial \bm{\kappa}/\partial \bm{\theta}=\bm{0}$ gives 
\begin{equation}
    \bm{\alpha} = \begin{pmatrix}
    \dfrac{a_{22}}{\sigma^2_{\varepsilon}} \\
    -\dfrac{a_{12}^2-a_{22}^2+1}{2a_{12}\sigma^2_{\varepsilon}} \\
    -\dfrac{2a_{11}a_{22}a_{12}-a_{12}^2a_{21}-2a_{12}a_{22}^2+a_{21}a_{22}^2-a_{21}}{2a_{12}^2\sigma^2_{\varepsilon}} \\
    -\dfrac{a_{22}}{\sigma^2_{\varepsilon}} \\
    0 \\
    1
    \end{pmatrix}.
\end{equation}
The position of the zero indicates that the parameter $\sigma^2_{\eta}$ is individually identifiable. Solving the partial differential equation 
\begin{align}
    &\dfrac{a_{22}}{\sigma^2_{\varepsilon}} \dfrac{\partial f}{\partial a_{11}}-\dfrac{a_{12}^2-a_{22}^2+1}{2a_{12}\sigma^2_{\varepsilon}} \dfrac{\partial f}{\partial a_{12}}-\dfrac{2a_{11}a_{22}a_{12}-a_{12}^2a_{21}-2a_{12}a_{22}^2+a_{21}a_{22}^2-a_{21}}{2a_{12}^2\sigma^2_{\varepsilon}}\dfrac{\partial f}{\partial a_{21}} \nonumber\\
    & -\dfrac{a_{22}}{\sigma^2_{\varepsilon}}\dfrac{\partial f}{\partial a_{22}} + \dfrac{\partial f}{\partial \sigma^2_{\varepsilon}}=0
\end{align}
gives $\beta_1=a_{11}+a_{22}$, $\beta_2=\sigma^2_{\eta}$, $\beta_3=a_{22}\sigma^2_{\varepsilon}$, $\beta_4=(a_{12}^2-a_{22}a_{11}+1)/a_{22}$ and $\beta_5=a_{11}a_{22}-a_{12}a_{21}$ as the estimable parameters. As an illustration, the exhaustive summary can be expressed in terms of a smaller parameter vector. Indeed
\begin{equation}
    \bm{\kappa}(\bm{\beta})=\begin{pmatrix}
    \beta_2\beta_5\\
    -\beta_2(\beta_5+1)\beta_1-\beta_3\\
    (\beta_1^2+\beta_5^2+1)\beta_2+(\beta_1+\beta_4)\beta_3\\
    \beta_5\\
    -(\beta_5+1)\beta_1\\
    \beta_1^2+\beta_5^2+1
\end{pmatrix}.
\end{equation}
Supplementary Material \ref{InteractionBetweenSpecies} gives this SSM when all the species are observed. In this case, the model is identifiable.

\subsection{Summary of results}

Let us first mention the simplest example, the classic AR(1) model with an observation equation (example \ref{AR1}), which is the model of \citet{knape2008estimability} and the first model presented by \citet{AugerMethe2016state}, who highlighted persistent identifiability issues in simple SSMs. They showed with a simulation study that the distribution of the variance parameter estimates was bimodal, following earlier work by \citet{dennis2006estimating}.  \citet{valpine2005State} and \citet{knape2008estimability} showed that an overestimated measurement variance could be compensated by an underestimated process variance, and vice versa \citep[see also more recently][for a multispecies extension]{certain2018how}. This suggests that the two quantities cannot be estimated separately in the absence of repeated measures at a given time, as envisioned by \citet{dennis2010replicated}. However, the exhaustive summary presented in Section \ref{AR1} demonstrates that the AR(1) model is theoretically identifiable. This result, theoretical identifiability yet practical non-identifiability of variance and potentially other parameters seems to follow for  ecological SSMs.  Section \ref{AnimalMovement} presents an animal movement model with a covariate. \citet{AugerMethe2016state} observed that parameter estimates vary widely across polar bear individuals. But theoretical identifiability is here demonstrated by the symbolic differentiation approach. The MAR(1) model defined in Section \ref{InteractionSpecies} to estimate interactions between species is a SSM when an observation equation is considered. If some species are unobserved it is not identifiable. But Supplementary Material \ref{InteractionBetweenSpecies} indicates that the interaction model is identifiable when all the species are observed. The above results suggest that classic linear SSMs used in ecology are theoretically identifiable. The estimation issues encountered \citep{Ives2003estimating, dennis2006estimating, AugerMethe2016state, certain2018how} are therefore due to practical non-identifiability, plausibly generated by too short or not enough variable time series (even though these might appear quite long by ecological standards). 

Now, we investigate the reasons for which a linear SSM can be theoretically non-identifiable. The population dynamics model described by Section \ref{PopulationDynamics} is found to be non-identifiable due to the two parameters only ever appearing as a product, a very classical source of non-identifiability issues. We need to rewrite the product as a unique parameter in order to estimate it: as expected, we confirm the result of \citet{cole2016parameter} on the same model with the new exhaustive summary. A second non-identifiability source explored in Section \ref{AR1} is the interaction between observation and ecological processes. Indeed, even the classic AR(1) model with observation error becomes non-identifiable when some covariance between the observation and process error is added---this is rarely done in practice fortunately, but could very well happen in the field (e.g. through bad weather conditions that decrease both population growth and the likelihood to see the organisms). Another, less obvious reason for non-identifiability is the degradation of a model. As mentioned above, Section \ref{InteractionSpecies} presents a two-species interaction model when only the first species is observed. When this loss of information occurs, the model becomes non-identifiable. 

\section{Discussion}
\label{Discussion}

In this paper, we construct an exhaustive summary for linear and stochastic state-space models that accounts for both mean process and variance parameters, building on the spectral representation of the stochastic process. This new exhaustive summary is then applied to three stochastic ecological systems. The analyses suggest that, in spite of identifiability issues being found in various SSMs observed for different lengths of time, classic linear SSMs frequently used by ecologists are theoretically identifiable (once obvious sources of unidentifiability such as  parameters appearing as products everywhere have been removed). Therefore, issues that have been pinpointed in such models are of a practical nature, due to data limitations. The one exception being models that observe only one compartment of a multi-compartment system. Although many classical ecology surveys try to observe all components of the ecosystem that they wish to model \citep{Ives2003estimating}, it is not always so \citep[e.g.,][]{ives2008high}. 

In addition to helping understand when we can theoretically identify variance parameters, can the new exhaustive summary provide different answers about mean process parameter identifiability than previous exhaustive summaries based on expectations of the stochastic process \citep{cole2016parameter, cole2020parameter}? Since the ecological model examples presented above are identifiable or not both in their stochastic and deterministic versions, to answer this question, Supplementary Material \ref{DifferencesES} presents three versions of a compartmental model which is more suited to the task: deterministic, stochastic without variance parameter and stochastic with variance parameter. This case study shows that the model is identifiable when no variance parameter is involved, but parameter redundant if the variance parameter needs to be estimated (with an irreductible parameter combination including both variance and mean process parameter). This demonstrates that in a number of cases, the output spectral density exhaustive summary will yield different -- and more exact -- conclusions than previously used exhaustive summaries for linear SSMs.

Previous research has also considered constructing exhaustive summaries by adding variance of observed variables in addition to expectations. This approach is outlined in the supplementary of \citet{cole2016parameter} and then applied in \citet{polansky2020improving,polansky2023combining} as well as \citet{aldrin2021caveats}. Except in the latter work, the technique has been applied to the variance of the individual variables $\mathbb{V}[y_{i,t}]$ rather than the full covariance matrix, and in \citet{aldrin2021caveats} is it applied to the \textit{instantaneous} variance-covariance matrix $\mathbb{V}[\bm{y}_{t}]$. Because the output spectral density, which provides an exact exhaustive summary, is the Fourier transform of the autocovariance $\bm{\Gamma}^h_y=\text{Cov}(\bm{y}_{t+h},\bm{y}_{t})$ at the stationary state, we know that the autocovariance can also provide exhaustive summaries.
Therefore, by validating the use of the output spectral density we also show that simpler variance- or covariance-based summaries (without the lagged terms of the autocovariance, and computed out of equilibrium) do not necessarily contain the full information to represent the model in general, although they might in some cases. Supplementary Material \ref{autocovariance} builds an exhaustive summary based on the autocovariance and explores an example.

We discuss now some limitations and potential perspectives of this work. We first discuss a rather practical limitation. We used the symbolic differentiation approach to determine the (non-)identifiability of the model. As already noticed by \citet{cole2010determining} and \citet{cole2016parameter}, when the exhaustive summary is structurally much more complex than the ones considered here, the computer may run out of memory trying to calculate the rank of the derivative matrix. For instance, in the empirical examples that we have used and using Maple code, the rank could not be computed when the interactions between four (or more) species are estimated. In order to investigate the identifiability in this case, some numerical approach could be required \citep[][Chapter 4]{cole2020parameter}. In a parametric statistical model, a classical numerical approach consists in checking the profile of the log-likelihood and evaluate the rank of the exhaustive summary numerically, which can be tricky due to sensitivity to numerical precision. A compromise between the numerical and the symbolic approach put forward by \citet{cole2020parameter} consists in a hybrid symbolic-numerical method, which finds the derivative matrix symbolically and then calculates the rank numerically for multiple sets of parameter values. This will likely be a reasonable approach to use the exhaustive summary presented above for much more complex SSMs. 

Throughout this paper, we made a strong assumption of stationarity. Similarly to \citet{glover1974parametrizations}, we assume that the output process has reached steady state when the practitioner starts the observations. In practice, there are many ecological (or other) datasets where this might be untrue: for instance, one might observe a biological invasion in real time, where a species will replace another one, or the system might be under the influence of a non-stationary climate variable, etc. Perturbations of the dynamics could nevertheless help to better estimate the parameters of the time series \citep{cao2017HumanGut}. Moreover, since the input process is fully stochastic, the model is supposed to be centered and covariates are not allowed (unless these are modelled as endogeneous to the model, e.g., the covariate process is specified too). These two limitations could be overcome by  using an exhaustive summary built on the Kalman filter, which can be used for the linear SSMs tackled in this paper. Supplementary Material \ref{KalmanFilterSection} presents how to investigate the identifiability of the SSM with the Kalman filter. We note that the derived exhaustive summary may be complex to use for fully symbolic differentiation; some hybrid symbolic-numerical method could be useful here as well.

Linear SSMs are rather popular in ecology and beyond. However, many processes can require nonlinear models \citep{AugerMethe2021Aguide}, and this assumption of linearity limits the exploration of the identifiability of numerous models, at least using a transfer function approach. The spectral density approach, which builds on the transfer function, seems less likely to be useful for nonlinear processes, and again an exhaustive summary could be derived from the Kalman filter, but this time in its extended formulation \citep{julier2004unscented}, for nonlinear processes. This could provide a natural extension to the present work. 

\section*{Acknowledgments}

This research was supported by grant ANR-20-CE45-0004 to FB. We thank Stéphane Robin and Bixuan Liu for discussions on identifiability in VAR(1) models. 

\bibliographystyle{apalike}
\bibliography{mabiblio}

@article{dennis2010replicated,
  title={Replicated sampling increases efficiency in monitoring biological populations},
  author={Dennis, Brian and Ponciano, Jos{\'e} Miguel and Taper, Mark L},
  journal={Ecology},
  volume={91},
  number={2},
  pages={610--620},
  year={2010},
  publisher={Wiley Online Library}
}

@article{pohjanpalo1978system,
  title={System identifiability based on the power series expansion of the solution},
  author={Pohjanpalo, Hannu},
  journal={Mathematical biosciences},
  volume={41},
  number={1-2},
  pages={21--33},
  year={1978},
  publisher={Elsevier}
}

@Manual{ R,
    title = {R: {A} {L}anguage and {E}nvironment for {S}tatistical {C}omputing},
    author = {{R Core Team}},
    organization = {R Foundation for Statistical Computing},
    address = {Vienna, Austria},
    year = {2024},
    url = {https://www.R-project.org/},
  }

@book{cole2020parameter,
  title={Parameter redundancy and identifiability},
  author={Cole, Diana},
  year={2020},
  publisher={Chapman and Hall/CRC}
}

@article{cole2010determining,
title = {Determining the parametric structure of models},
journal = {Mathematical Biosciences},
volume = {228},
number = {1},
pages = {16-30},
year = {2010},
issn = {0025-5564},
doi = {10.1016/j.mbs.2010.08.004},
author = {Cole, D.J. and Morgan, B.J.T. and Titterington, D.M.}
}

@article{godfrey1985identifiability,
title = {Identifiability of Model Parameter},
journal = {IFAC Proceedings Volumes},
volume = {18},
number = {5},
pages = {89-114},
year = {1985},
note = {7th IFAC/IFORS Symposium on Identification and System Parameter Estimation, York, UK, 3-7 July},
issn = {1474-6670},
doi = {https://doi.org/10.1016/S1474-6670(17)60544-5},
url = {https://www.sciencedirect.com/science/article/pii/S1474667017605445},
author = {K.R. Godfrey and J.J. DiStefano}
}

@article{bellman1970on,
title = {On structural identifiability},
journal = {Mathematical Biosciences},
volume = {7},
number = {3},
pages = {329-339},
year = {1970},
issn = {0025-5564},
doi = {https://doi.org/10.1016/0025-5564(70)90132-X},
url = {https://www.sciencedirect.com/science/article/pii/002555647090132X},
author = {R. Bellman and K.J. {\AA}str{\"o}m}
}

@article{cole2016parameter,
author = {Cole, Diana J. and McCrea, Rachel S.},
title = {Parameter redundancy in discrete state-space and integrated models},
journal = {Biometrical Journal},
volume = {58},
number = {5},
pages = {1071-1090},
doi = {https://doi.org/10.1002/bimj.201400239},
url = {https://onlinelibrary.wiley.com/doi/abs/10.1002/bimj.201400239},
year = {2016}
}

@article{polansky2020improving,
    author = {Polansky, Leo and Newman, Ken B. and Mitchell, Lara},
    title = "{Improving inference for nonlinear state-space models of animal population dynamics given biased sequential life stage data}",
    journal = {Biometrics},
    volume = {77},
    number = {1},
    pages = {352-361},
    year = {2020},
    month = {04},
    issn = {0006-341X},
    doi = {10.1111/biom.13267},
    url = {https://doi.org/10.1111/biom.13267}
}

@article{aldrin2021caveats,
title = {Caveats with estimating natural mortality rates in stock assessment models using age aggregated catch data and abundance indices},
journal = {Fisheries Research},
volume = {243},
pages = {106071},
year = {2021},
issn = {0165-7836},
doi = {https://doi.org/10.1016/j.fishres.2021.106071},
url = {https://www.sciencedirect.com/science/article/pii/S0165783621001995},
author = {M. Aldrin and F.L. Aanes and I.F. Tvete and S. Aanes and S. Subbey}
}

@article{polansky2023combining,
author = {Polansky, Leo and Mitchell, Lara and Newman, Ken B.},
title = {Combining multiple data sources with different biases in state-space models for population dynamics},
journal = {Ecology and Evolution},
volume = {13},
number = {6},
pages = {e10154},
keywords = {capture probability, delta smelt, estimate bias, hierarchical model, observation error, population ecology, time series},
doi = {https://doi.org/10.1002/ece3.10154},
url = {https://onlinelibrary.wiley.com/doi/abs/10.1002/ece3.10154},
year = {2023}
}

@article{thowsen1978identifiability,
  title={Identifiability of dynamic systems},
  author={Thowsen, Arild},
  journal={International Journal of Systems Science},
  volume={9},
  number={7},
  pages={813--825},
  year={1978},
  doi = {https://doi.org/10.1080/00207727808941738},
  publisher={Taylor \& Francis}
}

@article{vlcek2001maximum,
title = {Maximum Likelihood Estimation of Parameters in State-Space Models},
journal = {IFAC Proceedings Volumes},
volume = {34},
number = {20},
pages = {105-109},
year = {2001},
note = {10th IFAC Symposium on Modelling and Control of Economic Systems (SME 2001), Klagenfurt, Austria, 6-8 September 2001},
issn = {1474-6670},
doi = {https://doi.org/10.1016/S1474-6670(17)33048-3},
author = {Jan Vl{\v c}ek}
}

@article{DeJong1988likelihood,
    author = {De Jong, Piet},
    title = "{The likelihood for a state space model}",
    journal = {Biometrika},
    volume = {75},
    number = {1},
    pages = {165-169},
    year = {1988},
    month = {03},
    issn = {0006-3444},
    doi = {10.1093/biomet/75.1.165}
}

@article{Besbeas2002Integrating,
author = {Besbeas, P. and Freeman, S. N. and Morgan, B. J. T. and Catchpole, E. A.},
title = {Integrating Mark–Recapture–Recovery and Census Data to Estimate Animal Abundance and Demographic Parameters},
journal = {Biometrics},
volume = {58},
number = {3},
pages = {540-547},
keywords = {Ardea cinerea, Birds, Census data, Herons, Kalman filter, Lapwings, Leslie matrix models, Maximum likelihood, Prediction, Productivity, Ring–recovery data, State–space, Vanellus vanellus, Winter weather},
doi = {https://doi.org/10.1111/j.0006-341X.2002.00540.x},
year = {2002}
}

@book{sarkka2019applied,
  title={Applied stochastic differential equations},
  author={S{\"a}rkk{\"a}, Simo and Solin, Arno},
  year={2019},
  publisher={Cambridge University Press}
}

@article{glover1974parametrizations,
  author={Glover, K. and Willems, J.},
  journal={IEEE Transactions on Automatic Control}, 
  title={Parametrizations of linear dynamical systems: {C}anonical forms and identifiability}, 
  year={1974},
  volume={19},
  number={6},
  pages={640-646},
  keywords={White noise;Linear systems;Transfer functions;Automatic control;Propulsion;Controllability;Control systems},
  doi={10.1109/TAC.1974.1100711}
}

@article{AugerMethe2021Aguide,
author = {Auger-M{\'e}th{\'e}, Marie and Newman, Ken and Cole, Diana and Empacher, Fanny and Gryba, Rowenna and King, Aaron A. and Leos-Barajas, Vianey and Mills Flemming, Joanna and Nielsen, Anders and Petris, Giovanni and Thomas, Len},
title = {A guide to state–space modeling of ecological time series},
journal = {Ecological Monographs},
volume = {91},
number = {4},
pages = {e01470},
keywords = {Bayesian, diagnostic, fitting procedure, frequentist, model selection, state–space model, time series},
doi = {10.1002/ecm.1470},
year = {2021}
}

@article{knape2008estimability,
author = {Knape, Jonas},
title = {ESTIMABILITY OF DENSITY DEPENDENCE IN MODELS OF TIME SERIES DATA},
journal = {Ecology},
volume = {89},
number = {11},
pages = {2994-3000},
keywords = {density dependence, state–space models, time series analysis},
doi = {10.1890/08-0071.1},
year = {2008}
}

@article{AugerMethe2016state,
    author = {Auger-M{\'e}th{\'e}, Marie and Field, Chris and Albertsen, Christoffer M. and Derocher, Andrew E. and Lewis, Mark A. and Jonsen, Ian D. and Mills Flemming, Joanna},
    title = {State-space models’ dirty little secrets: even simple linear Gaussian models can have estimation problems},
    journal = {Scientific Reports},
    year = {2016},
    volume = {6},
    pages = {26677},
    doi = {10.1038/srep26677}
}

@article{grewal1976identifiability,

  author={Grewal, M. and Glover, K.},

  journal={IEEE Transactions on Automatic Control}, 

  title={Identifiability of linear and nonlinear dynamical systems}, 

  year={1976},

  volume={21},

  number={6},

  pages={833-837},

  keywords={Nonlinear dynamical systems;Least squares methods;Transfer functions;Nonlinear systems;Differential equations;Parameter estimation;Data mining;Time invariant systems;Stability;Automatic control},

  doi={10.1109/TAC.1976.1101375}}

@article{Ives2003estimating,
author = {Ives, A. R. and Dennis, B. and Cottingham, K. L. and Carpenter, S. R.},
title = {ESTIMATING COMMUNITY STABILITY AND ECOLOGICAL INTERACTIONS FROM TIME-SERIES DATA},
journal = {Ecological Monographs},
volume = {73},
number = {2},
pages = {301-330},
keywords = {community matrix, community stability, multivariate autoregressive process, reactivity, resilience, stationary distribution, stochastic population model, time-series analysis, vector autoregressive process},
doi = {10.1890/0012-9615(2003)073[0301:ECSAEI]2.0.CO;2},
year = {2003}
}

@article{ives2008high,
  title={{High-amplitude fluctuations and alternative dynamical states of midges in Lake Myvatn}},
  author={Ives, Anthony R and Einarsson, {\'A}rni and Jansen, Vincent AA and Gardarsson, Arnthor},
  journal={Nature},
  volume={452},
  number={7183},
  pages={84--87},
  year={2008},
  publisher={Nature Publishing Group UK London}
}

@article{jonsen2005robust,
author = {Jonsen, Ian D. and Flemming, Joanna Mills and Myers, Ransom A.},
title = {ROBUST STATE–SPACE MODELING OF ANIMAL MOVEMENT DATA},
journal = {Ecology},
volume = {86},
number = {11},
pages = {2874-2880},
keywords = {Argos satellite telemetry, Bayesian, behavior, dispersal, foraging, migration, random walks, switching models, uncertainty, WinBUGS},
doi = {10.1890/04-1852},
year = {2005}
}

@article{catchpole1997detecting,
    author = {Catchpole, E. A. and Morgan, B. J. T.},
    title = {Detecting parameter redundancy},
    journal = {Biometrika},
    volume = {84},
    number = {1},
    pages = {187-196},
    year = {1997},
    month = {03},
    issn = {0006-3444},
    doi = {10.1093/biomet/84.1.187}
}

@book{aoki2013state,
  title={State space modeling of time series, Second Edition},
  author={Aoki, Masanao},
  year={2013},
  publisher={Springer Science \& Business Media}
}

@ARTICLE{julier2004unscented,
  author={Julier, S.J. and Uhlmann, J.K.},
  journal={Proceedings of the IEEE}, 
  title={Unscented filtering and nonlinear estimation}, 
  year={2004},
  volume={92},
  number={3},
  pages={401-422},
  keywords={Filtering;Nonlinear systems;Target tracking;Control systems;Particle tracking;Kalman filters;Vehicles;Navigation;Chemical processes;Nonlinear control systems},

  doi={10.1109/JPROC.2003.823141}
}

@article{certain2018how,
author = {Certain, Gr{\'e}goire and Barraquand, Fr{\'e}d{\'e}ric and G{\aa}rdmark, Anna},
title = {How do {MAR}(1) models cope with hidden nonlinearities in ecological dynamics?},
journal = {Methods in Ecology and Evolution},
volume = {9},
number = {9},
pages = {1975-1995},
keywords = {forecasting, interaction strength, MARSS, multivariate autoregressive models, PRESS perturbation},
doi = {https://doi.org/10.1111/2041-210X.13021},
year = {2018}
}

@article{cao2017HumanGut,
author = {Cao, Hong-Tai and Gibson, Travis E. and Bashan, Amir and Liu, Yang-Yu},
title = {Inferring human microbial dynamics from temporal metagenomics data: Pitfalls and lessons},
journal = {BioEssays},
volume = {39},
number = {2},
pages = {1600188},
keywords = {dynamics inference, ecological modeling, human microbiome, temporal metagenomics},
doi = {https://doi.org/10.1002/bies.201600188},
year = {2017}
}

@article{dennis2006estimating,
author = {Dennis, Brian and Ponciano, Jos\'e Miguel and Lele, Subhash R. and Taper, Mark L. and Staples, David F.},
title = {ESTIMATING DENSITY DEPENDENCE, PROCESS NOISE, AND OBSERVATION ERROR},
journal = {Ecological Monographs},
volume = {76},
number = {3},
pages = {323-341},
keywords = {Breeding Bird Survey, environmental noise, Gompertz growth model, Kalman filter, measurement error, multimodal likelihood, observation error, process noise, sampling error, state-space model, stationary distribution, stochastic population model},
doi = {https://doi.org/10.1890/0012-9615(2006)76[323:EDDPNA]2.0.CO;2},
year = {2006}
}

@article{valpine2005State,
author = {de Valpine, Perry and Hilborn, Ray},
title = {State-space likelihoods for nonlinear fisheries time-series},
journal = {Canadian Journal of Fisheries and Aquatic Sciences},
volume = {62},
number = {9},
pages = {1937-1952},
year = {2005},
doi = {https://10.1139/f05-116}
}

@book{lutkepohl2006new,
  title={New introduction to multiple time series analysis},
  author={L{\"u}tkepohl, Helmut},
  year={2006},
  publisher={Cambridge University Press}
}

\newpage

\renewcommand{\theequation}{S\arabic{equation}}
\renewcommand{\thesection}{S\arabic{section}}
\renewcommand{\thefigure}{S\arabic{figure}}

\setcounter{equation}{0}
\setcounter{section}{0}
\setcounter{figure}{0}

\section{Supplementary Materials}

\subsection{Derivations for the transfer function and output spectral density}
\label{derivations}

A linear discrete in time SSM is defined by the equations 
\begin{equation}
    \bm{x}_{t+1}=\bm{Ax}_t+\bm{Bu}_t \quad \text{and} \quad \bm{y}_t=\bm{Cx}_t+\bm{Du}_t,
\end{equation}
where the input sequence $(\bm{u}_t)$ is assumed to be a white noise process, such that for all $t \geq 0$, $\bm{u}_t\sim \mathcal{N}(\bm{0}, \bm{Q})$. To find the transfer function we use the discrete-time Fourier transform. Assuming that $\bm{x}_0=\bm{0}$, we have
\begin{align}
   & \mathcal{F}\left\{\bm{x}_{t+1}\right\}(i\omega)=\mathcal{F}\left\{\bm{A}\bm{x}_t+\bm{B}\bm{u}_t\right\}(i\omega)\nonumber\\
   \Rightarrow  \quad & \exp(i\omega)\bm{X}(i\omega)=\bm{AX}(i\omega)+\bm{BU}(i\omega) \nonumber\\
   \Rightarrow   \quad & \bm{X}(i\omega) =\left(\exp(i\omega)\bm{I}-\bm{A}\right)^{-1}\bm{BU}(i\omega)
\end{align}
where $\bm{X}(i\omega)$ and $\bm{U}(i\omega)$ define respectively the discrete-time Fourier transform of $(\bm{x}_t)$ and $(\bm{u}_t)$ at point $i\omega$. The discrete-time Fourier transform of $(\bm{y}_t)$ thus becomes
\begin{align}
    \bm{Y}(i\omega)&=\bm{C}\left(\exp(i\omega)\bm{I}-\bm{A}\right)^{-1}\bm{BU}(i\omega)+\bm{DU}(i\omega) \nonumber\\
    & =\left[\bm{C}\left(\exp(i\omega)\bm{I}-\bm{A}\right)^{-1}\bm{B}+\bm{D}\right]\bm{U}(i\omega) \nonumber\\
    &= \bm{H}_{s}(\bm{\theta})\bm{U}(i\omega)
\end{align}
where $\bm{H}_{s}(\bm{\theta}):=\bm{C}\left(s\bm{I}-\bm{A}\right)^{-1}\bm{B}+\bm{D}$ defines the transfer function with  $s=\exp(i\omega)$. The output spectral density writes
\begin{align}
    \bm{S}(\bm{\theta})&=\mathbb{E}\left[\bm{Y}(i\omega)\bm{Y}(-i\omega)^{\top}\right] \nonumber\\
    &=\bm{H}_{s}(\bm{\theta})\mathbb{E}\left[\bm{U}(i\omega)\bm{U}(-i\omega)^{\top}\right]\bm{H}_{1/s}^{\top}(\bm{\theta}) \nonumber\\
    &=\bm{H}_{s}(\bm{\theta})\bm{Q}\bm{H}_{1/s}^{\top}(\bm{\theta})
\end{align}

\subsection{On the continuous counterpart}
\label{ContinuousCounterpart}
\subsubsection{Derivations for the transfer function and output spectral density}

A linear continuous in time SSM  is defined by the equations 
\begin{equation}
    \dfrac{\partial \bm{x}(t)}{\partial t}=\bm{Ax}(t)+\bm{Bu}(t) \quad \text{and} \quad \bm{y}(t)=\bm{Cx}(t)+\bm{Du}(t),
\end{equation}
where the input $\bm{u}$ is assumed to be a white noise process with $\mathbb{E}[\bm{u}(t_1)\bm{u}(t_2)^{\top}]=\bm{Q}\delta(t_1-t_2)$. To find the transfer function we use the continuous-time Fourier transform. Assuming that $\bm{x}(0)=\bm{0}$, we have
\begin{align}
   & \mathcal{F}\left\{\dfrac{\partial \bm{x}(t)}{\partial t}\right\}(i\omega)=\mathcal{F}\left\{\bm{Ax}(t)+\bm{Bu}(t)\right\}(i\omega) \nonumber\\
   \Rightarrow  \quad & (i\omega)\bm{X}(i\omega)=\bm{AX}(i\omega)+\bm{BU}(i\omega) \nonumber\\
   \Rightarrow   \quad & \bm{X}(i\omega) =\left((i\omega)\bm{I}-\bm{A}\right)^{-1}\bm{BU}(i\omega)
\end{align}
where $\bm{X}(i\omega)$ and $\bm{U}(i\omega)$ define respectively the continuous-time Fourier transform of $\bm{x}$ and $\bm{u}$ at point $(i\omega)$. The continuous-time Fourier transform of $\bm{y}$ thus becomes
\begin{align}
    \bm{Y}(i\omega)&=\bm{C}\left((i\omega)\bm{I}-\bm{A}\right)^{-1}\bm{BU}(i\omega)+\bm{DU}(i\omega) \nonumber\\
    & =\left[\bm{C}\left((i\omega)\bm{I}-\bm{A}\right)^{-1}\bm{B}+\bm{D}\right]\bm{U}(i\omega) \nonumber\\
    &= \bm{H}_{s}(\bm{\theta})\bm{U}(i\omega)
\end{align}
where $\bm{H}_{s}(\bm{\theta}):=\bm{C}\left(s\bm{I}-\bm{A}\right)^{-1}\bm{B}+\bm{D}$ defines the transfer function with  $s=(i\omega)$. To assure the existence of the transfer function, $s$ must be different from the eigenvalues of $\bm{A}$. The output spectral density writes
\begin{align}
    \bm{S}(\bm{\theta})&=\mathbb{E}\left[\bm{Y}(i\omega)\bm{Y}(-i\omega)^{\top}\right] \nonumber\\
    &=\bm{H}_{s}(\bm{\theta})\mathbb{E}\left[\bm{U}(i\omega)\bm{U}(-i\omega)^{\top}\right]\bm{H}_{-s}^{\top}(\bm{\theta}) \nonumber\\
    &=\bm{H}_{s}(\bm{\theta})\bm{Q}\bm{H}_{-s}^{\top}(\bm{\theta}).
\end{align}

\subsubsection{Theoretical result}
Following the same lines as the paper, we obtain 

\begin{theorem}
    For stochastic linear-time-invariant continuous-time SSMs, the components of the output spectral density have the form
    \begin{equation}
        [\bm{S}(\bm{\theta})]_{ij}=\dfrac{\sum_{n=0}^{2N}a_n(\bm{\theta})s^{n}}{\sum_{n=0}^{2N}b_n(\bm{\theta})s^{n}},
    \end{equation}
    and the output spectral density exhaustive summary $\bm{\kappa}(\bm{\theta})$ consists of the non-constant coefficients $a_n(\bm{\theta})$ and $b_n(\bm{\theta})$ of the powers of $s$ in the numerators and denominators of $\bm{S}(\bm{\theta})$.
\end{theorem}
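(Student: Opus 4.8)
The plan is to mirror the discrete-time argument of Theorem \ref{OSDES} essentially verbatim, exploiting the fact that the only structural difference is that the complex variable is now $s = i\omega$ ranging over the imaginary axis rather than $s = \exp(i\omega)$ on the unit circle, and that the second factor in the spectral density is $\bm{H}_{-s}^T(\bm{\theta})$ rather than $\bm{H}_{1/s}^T(\bm{\theta})$. First I would recall from \citet[][Section 7.2.1]{cole2020parameter} (or re-derive from the resolvent formula $\bm{H}_s(\bm{\theta}) = \bm{C}(s\bm{I}-\bm{A})^{-1}\bm{B}+\bm{D}$, using the adjugate identity $(s\bm{I}-\bm{A})^{-1} = \operatorname{adj}(s\bm{I}-\bm{A})/\chi_A(s)$) that each component of the transfer function is a rational function $[\bm{H}_s(\bm{\theta})]_{ij} = \left(\sum_{n=0}^N \alpha_n(\bm{\theta})s^n\right)/\chi_A(s)$, where $\chi_A$ is the degree-$N$ characteristic polynomial of $\bm{A}$. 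Writing $\bm{H}_s(\bm{\theta}) = \bm{G}_s(\bm{\theta})/\chi_A(s)$ with $[\bm{G}_s(\bm{\theta})]_{ij}$ a polynomial of degree at most $N$ in $s$, the spectral density becomes
\begin{equation}
    \bm{S}(\bm{\theta}) = \bm{H}_s(\bm{\theta})\bm{Q}\bm{H}_{-s}^T(\bm{\theta}) = \dfrac{1}{\chi_A(s)\chi_A(-s)}\,\bm{G}_s(\bm{\theta})\bm{Q}\bm{G}_{-s}^T(\bm{\theta}).
\end{equation}

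Next I would observe that $\bm{G}_s(\bm{\theta})\bm{Q}\bm{G}_{-s}^T(\bm{\theta})$ is a matrix whose entries are polynomials in $s$ of degree at most $2N$: multiplication by the constant matrix $\bm{Q}$ does not change polynomial degrees, and the product of a degree-$\le N$ polynomial in $s$ with a degree-$\le N$ polynomial in $-s$ is a degree-$\le 2N$ polynomial in $s$. The same reasoning gives that the denominator $\chi_A(s)\chi_A(-s)$ is a polynomial in $s$ of degree exactly $2N$. Hence each $[\bm{S}(\bm{\theta})]_{ij}$ is already in the form $\left(\sum_{n=0}^{2N} a_n(\bm{\theta})s^n\right)/\left(\sum_{n=0}^{2N} b_n(\bm{\theta})s^n\right)$ without needing the extra multiplication by $s^N/s^N$ that was required in the discrete case (there $\bm{H}_{1/s}$ introduced negative powers; here $\bm{H}_{-s}$ does not). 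Finally, since a polynomial is determined by its coefficient vector, collecting the non-constant coefficients $a_n(\bm{\theta})$ and $b_n(\bm{\theta})$ over all entries $(i,j)$ yields a vector that determines $\bm{S}(\bm{\theta})$ and is determined by it, i.e.\ an exhaustive summary in the sense of \citet{cole2010determining}.

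I do not anticipate a genuine obstacle, since the continuous case is in fact slightly cleaner than the discrete one. The only point demanding a little care is the claim that the numerator and denominator genuinely have degree $2N$ (and not something smaller after cancellation): one should note that the stated form allows the top coefficients $a_{2N}, b_{2N}$ to vanish for particular $\bm{\theta}$, and that this does not affect the argument because the exhaustive summary records whichever coefficients are non-constant functions of $\bm{\theta}$. A second minor subtlety is the normalisation of the rational expression: numerator and denominator are only determined up to a common scalar multiple, so strictly one works with the spectral density as a ratio and the exhaustive summary is the collection of coefficient \emph{ratios} (or one fixes a normalisation, e.g.\ the leading coefficient of $\chi_A(s)\chi_A(-s)$, exactly as in the discrete proof). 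With these remarks in place the proof is a direct transcription of the proof of Theorem \ref{OSDES}, replacing $1/s$ by $-s$ throughout and omitting the final clearing-of-denominators step.
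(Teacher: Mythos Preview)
Your proposal is correct and follows exactly the approach the paper intends: the paper does not give a separate proof for the continuous-time statement but simply says ``Following the same lines as the paper, we obtain'' the result, i.e.\ it expects the reader to transcribe the proof of Theorem~\ref{OSDES} with $1/s$ replaced by $-s$. Your observation that the final $s^N/s^N$ clearing step is unnecessary in the continuous case (because $\bm{H}_{-s}$ introduces no negative powers) is correct and is the only substantive difference from the discrete proof.
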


\subsubsection{Example}

Here, we consider the stochastic counterpart of the two-compartment model presented by \citet{bellman1970on}.  The state equation writes
\begin{equation}
    \dfrac{\partial \bm{x}(t)}{\partial t} = \begin{pmatrix}
    -\theta_{0,1}-\theta_{2,1} & \theta_{1,2} \\
    \theta_{2,1} & -\theta_{0,2}-\theta_{1,2}
\end{pmatrix}\bm{x}(t)+\begin{pmatrix}
    1\\
    0
\end{pmatrix}u(t)
\end{equation}
where the input $u$ is assumed to be a white noise process with $\mathbb{E}[u(t_1)u(t_2)]=\sigma^2\delta(t_1-t_2)$. The observation equation is given by
\begin{equation}
    y(t)=\begin{pmatrix}
    1 & 0
\end{pmatrix}\bm{x}(t).
\end{equation}
The vector of parameters is $\bm{\theta}=(\sigma^2, \theta_{0,1},\theta_{0,2},\theta_{1,2},\theta_{2,1})$. Removing the constant coefficients, the exhaustive summary writes
\begin{equation}
    \bm{\kappa}(\bm{\theta})=\begin{pmatrix}
    (\theta_{0,2}+\theta_{1,2})^2\sigma^2\\
    -\sigma^2\\
    ((\theta_{0,2}+\theta_{1,2})\theta_{0,1}+\theta_{0,2}\theta_{2,1})^2\\
    -(\theta_{0,1}+\theta_{2,1})^2-(\theta_{0,2}+\theta_{1,2})^2-2\theta_{1,2}\theta_{2,1}
\end{pmatrix}.
\end{equation}
The derivative matrix has rank $r=4$, but there are five parameters, this model is parameter redundant with deficiency $d=1$. Solving the equation $\bm{\alpha}^{\top} (\partial \bm{\kappa}/\partial \bm{\theta})=\bm{0}$ gives 
\begin{equation}
    \bm{\alpha}^{\top}=\begin{pmatrix}
    0 & -1 & \dfrac{\theta_{1,2}}{\theta_{2,1}} & -\dfrac{\theta_{1,2}}{\theta_{2,1}} & 1
\end{pmatrix}.
\end{equation}
Solving the partial differential equation 
\begin{equation}
    -\dfrac{\partial f}{\partial\theta_{0,1}}+\dfrac{\theta_{1,2}}{\theta_{2,1}}\dfrac{\partial f}{\partial\theta_{0,2}}-\dfrac{\theta_{1,2}}{\theta_{2,1}}\dfrac{\partial f}{\partial\theta_{1,2}}+\dfrac{\partial f}{\partial\theta_{2,1}}=0
\end{equation}
gives $\beta_1=\sigma^2$, $\beta_2=\theta_{0,1}+\theta_{2,1}$, $\beta_3=\theta_{1,2}\theta_{2,1}$ and $\beta_4=\theta_{0,2}+\theta_{1,2}$ as estimable parameters.

\subsection{Interaction between species with full-system observations}
\label{InteractionBetweenSpecies}

Inspired by the Multivariate AutoRegressive Model presented by \citet{Ives2003estimating} to estimate interaction between species, we consider the following state process
\begin{equation}
    \bm{x}_{t+1}=\begin{pmatrix}
    a_{11} & a_{12}\\
    a_{21} & a_{22}
\end{pmatrix}\bm{x}_t+\bm{\varepsilon}_t,\quad \bm{\varepsilon}_t\sim\mathcal{N}(\bm{0},\sigma^2_{\varepsilon}\bm{I}_2),
\end{equation}
where $\bm{x}_t$ is the log-transformed population abundance and the elements $a_{ij}$ give the effect of the (log-)abundance of species $j$ on the population growth rate of species $i$. In order to have a state-space form, we now assume that all the species are observed with error, with the same observation variance for all species, that is
\begin{equation}
    \bm{y}_t=\bm{x}_t+\bm{\eta}_t,\quad \bm{\eta}_t\sim \mathcal{N}(\bm{0},\sigma^2_{\eta}\bm{I}_2).
\end{equation}
The vector of parameters is $\bm{\theta}=(a_{11},a_{12},a_{21},a_{22},\sigma^2_{\eta}, \sigma^2_{\varepsilon})$. Removing the equal coefficients of the output spectral density, the exhaustive summary writes
\begin{align}
   & \bm{\kappa}(\bm{\theta})=   \\
   &\begin{pmatrix}
    (a_{11}a_{22}-a_{12}a_{21})\sigma^2_{\eta}\\
    -(a_{11}+a_{22})(a_{11}a_{22}-a_{12}a_{21}+1)\sigma^2_{\eta}-a_{22}\sigma^2_{\varepsilon}\\
    ((a_{11}^2+1)a_{22}^2+(-2a_{12}a_{21}+2)a_{11}a_{22}+a_{12}^2a_{21}^2+a_{11}^2+1)\sigma^2_{\eta}+(a_{12}^2+a_{22}^2+1)\sigma^2_{\varepsilon}\\
    a_{11}a_{22}-a_{12}a_{21}\\
    -(a_{11}+a_{22})(a_{11}a_{22}-a_{12}a_{21}+1)\\
    (a_{11}a_{22}-a_{12}a_{21})^2+(a_{11}+a_{22})^2+1\\
    a_{12}\sigma^2_{\varepsilon}\\
    -(a_{11}a_{22}+a_{12}a_{21})\sigma^2_{\epsilon}\\
    -(a_{11}+a_{22})(a_{11}a_{22}-a_{12}a_{21}+1)\sigma^2_{\eta}-a_{11}\sigma^2_{\varepsilon}\\
    ((a_{22}^2+1)a_{11}^2+(-2a_{12}a_{21}+2)a_{11}a_{22}+a_{12}^2a_{21}^2+a_{22}^2+1)\sigma^2_{\eta}+(a_{11}^2+a_{21}^2+1)\sigma^2_{\varepsilon}
\end{pmatrix}. \nonumber
\end{align}
We compute the derivative matrix and we calculate the rank. As the derivative matrix has rank $r=6$, and there are six parameters, this model is not parameter redundant.

\subsection{Differences between exhaustive summaries in absence and presence of noise}
\label{DifferencesES}

This section is dedicated to the presentation of three exhaustive summaries designed to investigate the (non)identifiability of the SSMs. We consider the same dynamical system as the paper, that is
\begin{equation}
    \bm{x}_{t+1}=\bm{Ax}_t+\bm{Bu}_t
\end{equation}
and
\begin{equation}
    \bm{y}_t=\bm{Cx}_t+\bm{Du}_t.
\end{equation}
To illustrate the differences between these exhaustive summaries, we study a model given by \citet[][Section 7.1.2]{cole2020parameter}. The matrices of this compartmental model are
\begin{equation*}
    \bm{A}=\begin{pmatrix}
        -\theta_{21} & 0 & 0\\
        \theta_{21} & -\theta_{32}-\theta_{02} & 0\\
        0 & \theta_{32} & 0
    \end{pmatrix}, \bm{B}=\begin{pmatrix}
        1\\
        0\\
        0
    \end{pmatrix}, \bm{C}=\begin{pmatrix}
        0 & 0 & 1
    \end{pmatrix} \quad \text{and} \quad \bm{D}=\bm{0}.
\end{equation*}
The first two exhaustive summaries are respectively given by Chapter 7 and Chapter 8 of \citet{cole2020parameter}.

\subsubsection{Exhaustive summary based on the transfer function}
Here, the sequence $(u_t)$ is deterministic. The transfer function writes 
\begin{align}
    \bm{H}_1(\bm{\theta})&=\bm{C}\left(s\bm{I}-\bm{A}\right)^{-1}\bm{B} \nonumber\\
    &= \dfrac{\theta_{21}\theta_{32}}{s^3+(\theta_{21}+\theta_{32}+\theta_{02})s^2+\theta_{21}(\theta_{32}+\theta_{02})s}
\end{align}
In the same way as Section \ref{MainTheorem}, the exhaustive summary $\bm{\kappa}_1$ encapsulates all the coefficients of all the polynomials of the transfer function. Indeed, 
\begin{equation}
    \bm{\kappa}_1(\bm{\theta})=\begin{pmatrix}
        \theta_{21}\theta_{32}\\
        \theta_{21}+\theta_{32}+\theta_{02}\\
        \theta_{21}(\theta_{32}+\theta_{02})
    \end{pmatrix}.
\end{equation}
The associated derivative matrix has rank $r=3$ and there are three parameters, this model is not parameter redundant.

\subsubsection{Exhaustive summary based on the (almost)  transfer function}
Here, the sequence $(u_t)$ is stochastic, but there is no variance parameter to estimate, i.e. $u_t\sim\mathcal{N}(0,\sigma^2)$ with $\sigma^2$ known. The identifiability is checked through the expectation of the model. We have 
\begin{equation}
    \mathbb{E}[\bm{x}_{t+1}]=\bm{A} \mathbb{E}[\bm{x}_{t}] \quad \text{and} \quad \mathbb{E}[\bm{y}_{t}]=\bm{C} \mathbb{E}[\bm{x}_{t}].
\end{equation}
Now, the transfer function writes
\begin{align}
    \bm{H}_2(\bm{\theta})&=\bm{C}\left(s\bm{I}-\bm{A}\right)^{-1}\bm{A}\bm{x}_0 \nonumber\\
    &=\dfrac{\theta_{32}x_{02}s+\theta_{32}(\theta_{21}x_{01}+\theta_{21}x_{02})}{s^2+(\theta_{21}+\theta_{32}+\theta_{02})s+\theta_{21}(\theta_{32}+\theta_{02})}
\end{align}
where $\bm{x}_0=(x_{01},x_{02},x_{03})$ is the vector of initial values. The derivations of the transfer function are given by \citet{cole2016parameter}. The exhaustive summary now writes
\begin{equation}
    \bm{\kappa}_2(\bm{\theta})=\begin{pmatrix}
        \theta_{32}(\theta_{21}x_{01}+\theta_{21}x_{02})\\
        \theta_{32}x_{02}\\
        \theta_{21}(\theta_{32}+\theta_{02})\\
        \theta_{21}+\theta_{32}+\theta_{02}
    \end{pmatrix}.
\end{equation}
The associated derivative matrix has rank $r=3$ and there are three parameters, this model is not parameter redundant.

\subsubsection{Exhaustive summary based on the output spectral density}
Here, the sequence $(u_t)$ is stochastic, and the variance parameter $\sigma^2$ needs to be estimated. Following Section \ref{Definition}, we get
\begin{equation}
    \bm{\kappa}_3(\bm{\theta})=\begin{pmatrix}
        \theta_{21}^2\theta_{32}^2\sigma^2\\
        \theta_{21}(\theta_{32}+\theta_{02})\\
        (\theta_{21}+\theta_{32}+\theta_{02})(\theta_{21}(\theta_{32}+\theta_{02})+1)\\
        \theta_{21}^2(\theta_{32}+\theta_{02})^2+(\theta_{21}+\theta_{32}+\theta_{02})^2+1
    \end{pmatrix}.
\end{equation}
The derivative matrix has rank $r=3$, but there are four parameters, this model is parameter redundant with deficiency $d=1$. Solving the equation $\bm{\alpha}^{\top} (\partial \bm{\kappa}_3/\partial \bm{\theta})=\bm{0}$ gives 
\begin{equation}
    \bm{\alpha}^{\top}=\begin{pmatrix}
    -\dfrac{2\sigma^2}{\theta_{32}} & -1 & 0 & 1
\end{pmatrix}.
\end{equation}
Solving the partial differential equation 
\begin{equation}
    -\dfrac{2\sigma^2}{\theta_{32}}\dfrac{\partial f}{\partial \sigma^2}-\dfrac{\partial f}{\partial\theta_{02}}+\dfrac{\partial f}{\partial\theta_{32}}=0
\end{equation}
gives $\beta_1=\theta_{21}$, $\beta_2=\theta_{32}^2\sigma^2$ and $\beta_3=\theta_{02}+\theta_{32}$ as estimable parameters.

\subsection{Exhaustive summary based on the autocovariance}
\label{autocovariance}

In this appendix, we consider the stochastic SSM given in Section \ref{Definition}
\begin{equation}
    \bm{x}_{t+1}=\bm{Ax}_t+\bm{Bu}_t \quad \text{and} \quad \bm{y}_t=\bm{Cx}_t+\bm{Du}_t,
\end{equation}
where the input sequence $(\bm{u}_t)$ is assumed to be a white noise process, such that for all $t>0$, $\bm{u}_t\sim\mathcal{N}(0,\bm{Q})$. We want to find an explicit formula for the successive autocovariance terms defined by $\bm{\Gamma}_y^h=\text{Cov}(\bm{y}_{t+h}, \bm{y}_t)$. Inspired by \citet[][Section 2.1.4 and Section 18.2.1]{lutkepohl2006new}, we first compute $\bm{\Gamma}_x^h=\text{Cov}(\bm{x}_{t+h}, \bm{x}_t)$.

\subsubsection{Derivations for the autocovariance of the latent process}

We have
\begin{align}
    \bm{\Gamma}_x^h & = \text{Cov}(\bm{x}_{t+h}, \bm{x}_{t}) \nonumber\\
     & = \bm{A}\text{Cov}(\bm{x}_{t+h-1}, \bm{x}_{t})+\text{Cov}(\bm{B}\bm{u}_{t+h-1}, \bm{x}_{t}) \nonumber\\
     &= \bm{A}\bm{\Gamma}_x^{h-1}+\text{Cov}(\bm{B}\bm{u}_{t+h-1}, \bm{x}_{t}).
\end{align}
Here $\text{Cov}(\bm{B}\bm{u}_{t+h-1}, \bm{x}_{t})=0$ if $h>0$ since $\bm{x}_t$ depends only on time up to $t-1$ while $\bm{u}_{t+h-1}$ is in the future (we call this argument ($\star$)). 
If $h=0$, we have 
\begin{align}
    \text{Cov}(\bm{B}\bm{u}_{t-1}, \bm{x}_{t}) &= \bm{B}\text{Cov}(\bm{u}_{t-1}, \bm{Ax}_{t-1}+\bm{Bu}_{t-1}) \nonumber\\
    &=\underbrace{\bm{B}\text{Cov}(\bm{u}_{t-1}, \bm{Ax}_{t-1})}_{=0\,\text{from}\,(\star)}+\bm{B}\text{Cov}(\bm{u}_{t-1}, \bm{u}_{t-1})\bm{B}^{\top} \nonumber\\
    =&\bm{BQB}^{\top},
\end{align}
and thus
\begin{equation}
    \bm{\Gamma}_x^h=\left\{\begin{matrix}
        \bm{A}\bm{\Gamma}_x^{-1}+\bm{BQB}^{\top},\quad h=0\\
        \bm{A}\bm{\Gamma}_x^{h-1},\quad h>0
    \end{matrix}\right.
\end{equation}
where $\bm{A}\bm{\Gamma}_x^{h-1}=\bm{A}^h\bm{\Gamma}_x^{0}$. Moreover, thanks to the previous derivations, we have $\bm{\Gamma}_x^0= \bm{A\Gamma}_x^{-1}+\bm{BQB}^{\top}= \bm{A}(\bm{\Gamma}_x^1)^{\top}+\bm{BQB}^{\top}$ and $\bm{\Gamma}_x^1=\bm{A\Gamma}^0_x$. We thus obtain 
\begin{equation}
    \bm{\Gamma}_x^0= \bm{A}\bm{\Gamma}_x^{0}\bm{A}^{\top}+\bm{BQB}^{\top}.
\end{equation}
To solve this equation we need to use the vec operator
\begin{align}
    &\text{vec}[\bm{\Gamma}_x^0]=(\bm{A}\otimes\bm{A})\text{vec}[\bm{\Gamma}_x^0]+\text{vec}[\bm{BQB}^{\top}] \nonumber\\
    \Rightarrow & \text{vec}[\bm{\Gamma}_x^0] = (\bm{I}_{N^2}-\bm{A}\otimes\bm{A})\text{vec}[\bm{BQB}^{\top}],
\end{align}
where $\otimes$ is the Kronecker product.

\subsubsection{Derivations when state and observation processes are independent}

Here, the covariance matrix $\bm{Q}$ of the white noise $(\bm{u}_t)$ is bloc-diagonal. We have
\begin{align}
    \bm{\Gamma}_y^h&=\text{Cov}(\bm{y}_{t+h}, \bm{y}_{t}) \nonumber\\
    &=\text{Cov}(\bm{Cx}_{t+h}+\bm{Du}_{t+h}, \bm{Cx}_{t}+\bm{Du}_{t})\nonumber\\
    &=\text{Cov}(\bm{Cx}_{t+h}, \bm{Cx}_{t}) + \text{Cov}(\bm{Cx}_{t+h},\bm{Du}_{t})+ \nonumber\\
    & \qquad \qquad \text{Cov}(\bm{Du}_{t+h}, \bm{Cx}_{t})+\text{Cov}(\bm{Du}_{t+h}\bm{Du}_{t})
\end{align}
Terms $\text{Cov}(\bm{Cx}_{t+h},\bm{Du}_{t})=\text{Cov}(\bm{Du}_{t+h}, \bm{Cx}_{t})=0$ since the state and observation process are independent. The autocovariance thus writes
\begin{equation}
    \bm{\Gamma}_y^h=\left\{\begin{matrix}
        \bm{C}\bm{\Gamma}_x^{0}\bm{C}^{\top}+\bm{DQD}^{\top},\quad h=0\\
        \bm{CA}^h\bm{\Gamma}_x^{0}\bm{C}^{\top},\quad h>0
    \end{matrix}\right.
\end{equation}


\subsubsection{Derivations when state and observation processes are not independent}

Now, since the matrix $\bm{Q}$ if of the form 
\begin{equation}
    \bm{Q}=\begin{pmatrix}
        \bm{Q}_1 & \bm{Q}_{12}\\
        \bm{Q}_{21} & \bm{Q}_2
    \end{pmatrix}
\end{equation}
the state and observation processes are not independent. We return to
\begin{align}
    \bm{\Gamma}_y^h&=\text{Cov}(\bm{Cx}_{t+h}, \bm{Cx}_{t}) + \text{Cov}(\bm{Cx}_{t+h},\bm{Du}_{t})+ \nonumber\\
    & \qquad \qquad \text{Cov}(\bm{Du}_{t+h}, \bm{Cx}_{t})+\text{Cov}(\bm{Du}_{t+h}\bm{Du}_{t})
\end{align}
Here $\text{Cov}(\bm{Du}_{t+h}, \bm{Cx}_{t})=0$ from $(\star)$, but we have to find an expression for $\text{Cov}(\bm{Cx}_{t+h},\bm{Du}_{t})$. By induction, we can proof that $\text{Cov}(\bm{x}_{t+h},\bm{Du}_{t})=\bm{A}^{h-1}\bm{BQD}^{\top}$. Indeed, for $h>1$ we have
\begin{align}
    \text{Cov}(\bm{x}_{t+h+1},\bm{Du}_{t}) &= \text{Cov}(\bm{Ax}_{t+h}+\bm{Bu}_{t+h},\bm{Du}_{t}) \nonumber\\
    &=\text{Cov}(\bm{Ax}_{t+h}, \bm{Du}_t)+\text{Cov}(\bm{Bu}_{t+h},\bm{Du}_t) \nonumber\\
    &=\bm{A}\text{Cov}(\bm{x}_{t+h}, \bm{Du}_t) \nonumber\\
    &=\bm{A}^{h}\bm{BQD}^{\top}
\end{align}
and for $h=1$, we start with
\begin{align}
    \text{Cov}(\bm{x}_{t+1},\bm{Du}_{t}) &= \text{Cov}(\bm{Ax}_t+\bm{Bu}_t,\bm{Du}_{t}) \nonumber\\
    &= \underbrace{\text{Cov}(\bm{Ax}_{t}, \bm{Du}_t)}_{=0\,\text{from}\,(\star)}+\text{Cov}(\bm{Bu}_t,\bm{Du}_t) \nonumber\\
    &=\bm{BQD}^{\top}
\end{align}
We thus obtain $\text{Cov}(\bm{Cx}_{t+h},\bm{Du}_{t})=\bm{CA}^{h-1}\bm{BQD}^{\top}$. Finally
\begin{equation}
\label{Gamma0y}
    \bm{\Gamma}_y^h=\left\{\begin{matrix}
        \bm{C}\bm{\Gamma}_x^{0}\bm{C}^{\top}+\bm{DQD}^{\top},\quad h=0\\
        \bm{CA}^h\bm{\Gamma}_x^{0}\bm{C}^{\top}+\bm{CA}^{h-1}\bm{BQD}^{\top},\quad h>0.
    \end{matrix}\right.
\end{equation}

\subsubsection{Construction of the exhaustive summary}

To capture all the parameter combinations, the exhaustive summary is given by the infinite concatenation of the $\text{vec}[\bm{\Gamma}^h_y]$'s, that is
\begin{equation}
    \bm{\kappa}=\begin{pmatrix}
        \text{vec}[\bm{\Gamma}^0_y]\\
        \text{vec}[\bm{\Gamma}^1_y]\\
        \text{vec}[\bm{\Gamma}^2_y]\\
        \vdots
    \end{pmatrix}.
\end{equation}
Since the length of the exhaustive summary is infinite, we need to find a way to determine the rank of the derivative matrix. A suitable approach can be found based on the arguments of \citet[][Option III]{cole2016parameter} or \citet[][Section 7.2.3.2]{cole2020parameter}. Indeed, if we consider the exhaustive summary $\bm{\kappa}_h$, for the first $h$ time lags, which is a function of all the parameters $\bm{\theta}$, the exhaustive summary can be extended to $\bm{\kappa}^{\top}=(\bm{\kappa}_h^{\top},\dots)$ with no extra parameter. Now, suppose the derivative matrix $\partial \bm{\kappa}_h/\partial \bm{\theta}$ is full rank, then $\partial \bm{\kappa}/\partial \bm{\theta}=[\partial \bm{\kappa}_h/\partial \bm{\theta}, \dots]$ is also full rank. However, if the derivative matrix $\partial \bm{\kappa}_h/\partial \bm{\theta}$ is not full rank, we need to recursively verify if the extra columns of the derivative matrix (corresponding to the derivatives of the extra terms of the exhaustive summary) still respect the null  linear combination of the rows. This alternative approach is illustrated with the autoregressive model of order one (AR(1)).

\subsubsection{Example with the AR(1) model}
\label{AR1Autocovariance}

We consider the  AR(1) model
\begin{equation}
  x_{t+1}=\rho x_t +\varepsilon_t, \quad \varepsilon_t \sim \mathcal{N}(0,\sigma^2_{\varepsilon})  
\end{equation}
with an observation equation
\begin{equation}
    y_t=x_t+\eta_t,\quad \eta_t \sim \mathcal{N}(0,\sigma^2_{\eta}).
\end{equation}
We thus have $A=\rho$, $B=(1\,\,0)$, $C=1$, $D=(0\,\,1)$ and
\begin{equation}
    \bm{Q}=\begin{pmatrix}
    \sigma^2_{\varepsilon} & 0\\
    0 & \sigma^2_{\eta}
\end{pmatrix}.
\end{equation}
If we compute $\text{vec}[\bm{\Gamma}^0_y]$, we have 
\begin{align}
    \text{vec}[\bm{\Gamma}^0_y] &= (\bm{C}\otimes \bm{C})\text{vec}[\bm{\Gamma}_x^0]+\text{vec}[\bm{DQD}^{\top}] \nonumber\\
    &=(\bm{C}\otimes \bm{C})(\bm{I}_{N^2}-\bm{A}\otimes\bm{A})\text{vec}[\bm{BQB}^{\top}]+\text{vec}[\bm{DQD}^{\top}] \nonumber\\
    &=(1-\rho^2)\text{vec}\left[\begin{pmatrix}
        1 & 0
    \end{pmatrix}\begin{pmatrix}
    \sigma^2_{\varepsilon} & \sigma_{\varepsilon,\eta}\\
    \sigma_{\varepsilon,\eta} & \sigma^2_{\eta}
\end{pmatrix}\begin{pmatrix}
        1\\
        0
    \end{pmatrix}\right]+\text{vec}\left[\begin{pmatrix}
        0 & 1
    \end{pmatrix}\begin{pmatrix}
    \sigma^2_{\varepsilon} & \sigma_{\varepsilon,\eta}\\
    \sigma_{\varepsilon,\eta} & \sigma^2_{\eta}
\end{pmatrix}\begin{pmatrix}
        0\\
        1
    \end{pmatrix}\right] \nonumber\\
    &=(1-\rho^2)\text{vec}[\sigma^2_{\varepsilon}]+\text{vec}[\sigma^2_{\eta}] \nonumber\\
    &=(1-\rho^2)\sigma^2_{\varepsilon}+\sigma^2_{\eta}. \label{Gamma0yAR1} 
\end{align}
Now, $\text{vec}[\bm{\Gamma}^1_y]$ and $\text{vec}[\bm{\Gamma}^2_y]$ are respectively given by
\begin{align}
    \text{vec}[\bm{\Gamma}^1_y] &= (\bm{C}\otimes\bm{CA})\text{vec}[\bm{\Gamma^0_x}] \nonumber\\
    &=\rho(1-\rho^2)\sigma^2_{\varepsilon}.
\end{align}
and
\begin{align}
    \text{vec}[\bm{\Gamma}^2_y] &= (\bm{C}\otimes\bm{CA}^2)\text{vec}[\bm{\Gamma^0_x}] \nonumber\\
    &=\rho^2(1-\rho^2)\sigma^2_{\varepsilon}.
\end{align}
The exhaustive summary is the infinite concatenation of the $\text{vec}[\bm{\Gamma}^h_y]$'s, that is 
\begin{equation}
    \bm{\kappa}=\begin{pmatrix}
        (1-\rho^2)\sigma^2_{\varepsilon}+\sigma^2_{\eta}\\
        \rho(1-\rho^2)\sigma^2_{\varepsilon}\\
        \rho^2(1-\rho^2)\sigma^2_{\varepsilon}\\
        \vdots 
    \end{pmatrix}.
\end{equation}
We restrict the exhaustive summary to the three (the number of parameters) first components, that is
\begin{equation}
    \bm{\kappa}_3=\begin{pmatrix}
        (1-\rho^2)\sigma^2_{\varepsilon}+\sigma^2_{\eta}\\
        \rho(1-\rho^2)\sigma^2_{\varepsilon}\\
        \rho^2(1-\rho^2)\sigma^2_{\varepsilon}
    \end{pmatrix}.
\end{equation}
We compute the derivative matrix 
\begin{equation}
    \dfrac{\partial \bm{\kappa}_3}{\partial \bm{\theta}}=\begin{pmatrix}
        -2\rho\sigma^2_{\varepsilon} & (1-3\rho^2)\sigma^2_{\varepsilon} & (2\rho-4\rho^3)\sigma^2_{\varepsilon}\\
        1 & 0 & 0\\
        1-\rho^2 & \rho(1-\rho^2) & \rho^2(1-\rho^2)
    \end{pmatrix}
\end{equation}
and we found the rank $r=3$ four three parameters. The derivative matrix $\partial \bm{\kappa}_3/\partial \bm{\theta}$ being full rank, the derivative matrix $\partial \bm{\kappa}/\partial \bm{\theta}=[
\partial \bm{\kappa}_3/\partial \bm{\theta},\dots ]$ is also full rank.

\subsubsection{Example with the AR(1) model with covariance parameters}

We consider the  AR(1) model where
\begin{equation}
    \bm{Q}=\begin{pmatrix}
    \sigma^2_{\varepsilon} & \sigma_{\varepsilon,\eta}\\
    \sigma_{\varepsilon,\eta} & \sigma^2_{\eta}
\end{pmatrix}.
\end{equation}
Now, we compute $\bm{\Gamma}^1_y$,
\begin{align}
    \text{vec}[\bm{\Gamma}^1_y] &= (\bm{C}\otimes\bm{CA})\text{vec}[\bm{\Gamma}^0_x]+\text{vec}\left[\bm{CBQD}^{\top}\right] \nonumber\\
    &=\rho(1-\rho^2)\sigma^2_{\varepsilon}+\text{vec}\left[\begin{pmatrix}
        1 & 0
    \end{pmatrix}\begin{pmatrix}
    \sigma^2_{\varepsilon} & \sigma_{\varepsilon,\eta}\\
    \sigma_{\varepsilon,\eta} & \sigma^2_{\eta}
\end{pmatrix}\begin{pmatrix}
        0\\
        1
    \end{pmatrix}\right] \nonumber\\
    &=\rho(1-\rho^2)\sigma^2_{\varepsilon}+\sigma_{\varepsilon,\eta}.
\end{align}
We recursively obtain
\begin{equation}
    \text{vec}[\Gamma^2_y]=\rho^2(1-\rho^2)\sigma^2_{\varepsilon}+\rho\sigma_{\varepsilon,\eta}
\end{equation}
and
\begin{equation}
    \text{vec}[\Gamma^3_y]=\rho^3(1-\rho^2)\sigma^2_{\varepsilon}+\rho^2\sigma_{\varepsilon,\eta}.
\end{equation}
Here we illustrate the necessity to use lag terms to take into account covariance parameters. A quick inspection of Equation \ref{Gamma0yAR1} would suggest that all elements of $\bm{Q}$ are always included in the variance-covariance matrix. However, because in most cases matrices $\bm{B}$ and $\bm{D}$ will be sparse, and $\bm{Q}$ occurs only as $\bm{BQB}^{\top}$ and $\bm{DQD}^{\top}$ in  Equation \ref{Gamma0y} and Equation \ref{Gamma0yAR1}, non-diagonal blocks $\bm{Q}_{12}$ and $\bm{Q}_{21}$ disappear from $\bm{\Gamma}_y^0$. Indeed, in this example we have $\bm{BQB}^{\top}=\sigma^2_{\varepsilon}$, $\bm{DQD}^{\top}=\sigma^2_{\eta}$ and $\bm{BQD}^{\top}=\sigma_{\varepsilon, \eta}$, but the latter product making the covariance between observation and process only appears in the $h>0$ lag terms. 

The exhaustive summary is the infinite concatenation of the $\text{vec}[\bm{\Gamma}^h_y]$'s, that is 
\begin{equation}
    \bm{\kappa}=\begin{pmatrix}
        (1-\rho^2)\sigma^2_{\varepsilon}+\sigma^2_{\eta}\\
        \rho(1-\rho^2)\sigma^2_{\varepsilon}+\sigma_{\varepsilon,\eta}\\
        \rho^2(1-\rho^2)\sigma^2_{\varepsilon}+\rho\sigma_{\varepsilon,\eta}\\
        \rho^3(1-\rho^2)\sigma^2_{\varepsilon}+\rho^2\sigma_{\varepsilon,\eta}\\
        \vdots 
    \end{pmatrix}
\end{equation}
For now, we cut the exhaustive summary in order to have the same number of rows as parameters
\begin{equation}
    \bm{\kappa}_4=\begin{pmatrix}
        (1-\rho^2)\sigma^2_{\varepsilon}+\sigma^2_{\eta}\\
        \rho(1-\rho^2)\sigma^2_{\varepsilon}+\sigma_{\varepsilon,\eta}\\
        \rho^2(1-\rho^2)\sigma^2_{\varepsilon}+\rho\sigma_{\varepsilon,\eta}\\
        \rho^3(1-\rho^2)\sigma^2_{\varepsilon}+\rho^2\sigma_{\varepsilon,\eta}
    \end{pmatrix}
\end{equation}
We compute the derivative matrix 
\begin{equation}
    \dfrac{\partial \bm{\kappa}_4}{\partial \bm{\theta}}=\begin{pmatrix}
        -2\rho\sigma^2_{\varepsilon} & (1-3\rho^2)\sigma^2_{\varepsilon} & (2\rho-4\rho^3)\sigma^2_{\varepsilon}+\sigma_{\varepsilon,\eta} & (3\rho^2-5\rho^4)\sigma^2_{\varepsilon}+2\rho\sigma_{\varepsilon,\eta}\\
        1 & 0 & 0 & 0\\
        1-\rho^2 & \rho(1-\rho^2) & \rho^2(1-\rho^2) & \rho^3(1-\rho^2)\\
        0 & 1 & \rho & \rho^2
    \end{pmatrix}
\end{equation}
and we calculate the rank $r=3$ for four parameters.  Solving the equation $\bm{\alpha}^{\top} (\partial \bm{\kappa}_4/\partial \bm{\theta})=\bm{0}$ gives 
\begin{equation}
    \bm{\alpha}^{\top}=\begin{pmatrix}
    0 & \dfrac{1}{\rho} & \dfrac{1}{(\rho^2-1)\rho} &  1
\end{pmatrix}.
\end{equation}
Solving the partial differential equation 
\begin{equation}
    \dfrac{1}{\rho}\dfrac{\partial f}{\partial \sigma^2_{\eta}}+\dfrac{1}{(\rho^2-1)\rho}\dfrac{\partial f}{\partial \sigma^2_{\varepsilon}}+\dfrac{\partial f}{\partial \sigma_{\varepsilon,\eta}}=0
\end{equation}
gives $\beta_1=\rho$, $\beta_2=\dfrac{\rho^2\sigma^2_{\varepsilon}-\sigma^2_{\eta}-\sigma^2_{\varepsilon}}{\rho^2-1}$ and $\beta_3=-\rho\sigma^2_{\eta}+\sigma_{\varepsilon,\eta}$ as estimable parameters. 

From the previous result, we want to deduce the rank of $\partial \bm{\kappa}/\partial \bm{\theta}$. We recall that we  have for $h>0$
\begin{equation}
    \text{vec}[\bm{\Gamma}^h_y]=\rho^h(1-\rho^2)\sigma^2_{\varepsilon}+\rho^{h-1}\sigma_{\varepsilon,\eta}
\end{equation}
Since, for $h>0$, we still have 
\begin{align*}
    &  \dfrac{1}{\rho}\dfrac{\partial \,\text{vec}[\bm{\Gamma}^h_y]}{\partial \sigma^2_{\eta}}+\dfrac{1}{(\rho^2-1)\rho}\dfrac{\partial \, \text{vec}[\bm{\Gamma}^h_y]}{\partial \sigma^2_{\varepsilon}}+\dfrac{\partial \, \text{vec}[\bm{\Gamma}^h_y]}{\partial \sigma_{\varepsilon,\eta}}\\
    =& \dfrac{1}{\rho}\times 0 + \dfrac{1}{(\rho^2-1)\rho}\times \rho^h(1-\rho^2)+\rho^{h-1}\\
    =&-\rho^{h-1}+\rho^{h-1}\\
    =&0
\end{align*}
there is a linear dependency between the lines of the matrix 
\begin{equation}
   \dfrac{\partial \bm{\kappa}}{\partial \bm{\theta}} =\begin{pmatrix}
        -2\rho\sigma^2_{\varepsilon} & (1-3\rho^2)\sigma^2_{\varepsilon} & \dots & \partial\, \text{vec}[\bm{\Gamma}^h_y]/\partial \rho & \dots\\
        1 & 0 & \dots & 0 & \dots\\
        1-\rho^2 & \rho(1-\rho^2) & \dots & \rho^h(1-\rho^2) & \dots\\
        0 & 1 & \dots & \rho^{h-1} & \dots
    \end{pmatrix}
\end{equation}
which has then rank $r=3$. As the derivative matrix has rank $r=3$, and there are four parameters, this model is parameter redundant with deficiency $d=1$. We recover here naturally the same result as the Fourier approach---although with a little more technicality. 

\subsubsection{Comparison with a previous approach}
In this section, we briefly recall the exhaustive summary proposed by \citet{cole2016parameter}. We consider a model of the form
\begin{equation}
    \bm{x}_{t+1}=\bm{Ax}_t+\bm{\varepsilon}_t \quad \text{and} \quad \bm{y}_t=\bm{Cx}_t+\eta_t,
\end{equation}
where $(\bm{\varepsilon}_t)$ and $(\bm{\eta}_t)$ are error processes with zero means. The exhaustive summary is given by 
\begin{equation}
    \bm{\kappa}=\begin{pmatrix}
        \mathbb{E}[\bm{y}_1]\\
        \mathbb{V}[\bm{y}_1]\\
        \mathbb{E}[\bm{y}_2]\\
        \mathbb{V}[\bm{y}_2]\\
        \vdots
    \end{pmatrix},
\end{equation}
where $\mathbb{E}[\bm{y}_t]=\bm{CA}^t\bm{x}_0$, with $\bm{x}_0\neq0$. If we go back to the AR(1) example, we obtain
\begin{equation}
    \bm{\kappa}=\begin{pmatrix}
        \rho x_0\\
    \sigma^2_{\varepsilon}+\sigma^2_{\eta}\\
    \rho^2 x_0\\
    (1+\rho^2)\sigma^2_{\varepsilon}+\sigma^2_{\eta}\\
    \rho^3 x_0\\
    (1+\rho^2+\rho^4)\sigma^2_{\varepsilon}+\sigma^2_{\eta}\\
    \vdots\\
    \rho^t x_0\\
    (1+\rho^2+\dots+\rho^{2(t-1)})\sigma^2_{\varepsilon}+\sigma^2_{\eta}\\
    \vdots
    \end{pmatrix}
\end{equation}
and we see that the exhaustive summary does not involve covariance parameters. This affirms the need to use an approach based on spectral density or autocovariance with lags to take into account covariance information in the identifiability diagnostic. Had we considered the model without covariance parameter, the same use of the Extension theorem as Section \ref{AR1Autocovariance} can be adopted, and we would reach the conclusion that the model is full rank using the variance-augmented summaries. Note well that it is not proven that the successive expectations and instantaneous variance-covariance matrices provide sufficient exhaustive summaries for state-space models without covariances between state and observation errors; it works for the state-space AR(1) model but it is not clear that it will work more generally. In particular it is important to note the different meaning of the term variance-covariance here and in \citet{cole2016parameter,aldrin2021caveats}: we consider a lag-zero autocovariance $\bm{\Gamma}_y^0$ computed at the stationary state, while they consider $\mathbb{V}[\bm{y}_t]$ that is computed at successive times (and does not require a stationary assumption). 

\subsection{Exhaustive summary based on the Kalman filter}
\label{KalmanFilterSection}

The stochastic linear discrete-time SSM we consider in this section consists in the two processes
\begin{equation}
    \bm{x}_{t+1}=\bm{Ax}_{t}+\bm{Bu}_{t}+\bm{\varepsilon}_{t},\quad \bm{\varepsilon}_{t}\sim \mathcal{N}(\bm{0}, \bm{Q})
\end{equation}
and 
\begin{equation}
    \bm{y}_t=\bm{Cx}_{t}+\bm{Du}_{t}+\bm{\eta}_{t},\quad \bm{\eta}_{t}\sim \mathcal{N}(\bm{0}, \bm{R})
\end{equation}
for $t=1,\dots,T$. In this model, the input sequence $(\bm{u}_t)$ is deterministic and could be considered as covariates. The parameters of the model $\bm{\theta}=\{\bm{A}, \bm{B}, \bm{C}, \bm{D}, \bm{Q}, \bm{R}\}$ can be estimated by the Kalman filter algorithm under the assumption that $\bm{\varepsilon}_{t}$ and $\bm{\eta}_{t}$ are Gaussian noise. The associated algorithm is recalled in the next section. 

\subsubsection{The Kalman Filter algorithm}

Following the derivations given by \citet{DeJong1988likelihood} and \citet{vlcek2001maximum}, the prediction and filtration steps of the Kalman filter are
\begin{align}
    \bm{z}_t &=\bm{y}_t-\bm{C}\bm{\hat{x}}_{t|t-1}-\bm{Du}_{t} \nonumber\\
    \bm{P}_t &=\bm{C}\bm{\Sigma}_{t|t-1}\bm{C}^{\top}+\bm{R} \nonumber\\
    \bm{K}_t &=\bm{\Sigma}_{t|t-1}\bm{C}^{\top}\bm{P}_t^{-1} \nonumber\\
    \bm{\hat{x}}_{t|t} &= \bm{\hat{x}}_{t|t-1}+\bm{K}_t \bm{z}_t \nonumber\\
    \bm{\Sigma}_{t|t} &= (\bm{I}-\bm{K}_t\bm{C})\bm{\Sigma}_{t|t-1} \nonumber\\
    \bm{\hat{x}}_{t+1|t} &= \bm{A}\bm{\hat{x}}_{t|t}+\bm{Bu}_t \nonumber\\
    \bm{\Sigma}_{t+1|t} &= \bm{A}\bm{\Sigma}_{t|t}\bm{A}^{\top}+\bm{Q}
\end{align}
where $\bm{z}_t$ is the innovation, $\bm{P}_t$ the innovation covariance, $\bm{K}_t$ the Kalman gain, $\bm{\hat{x}}_{t|t}$ the updated state, $\bm{\Sigma}_{t|t}$ the updated covariance, $\bm{\hat{x}}_{t+1|t}$ the predicted state and $\bm{\Sigma}_{t+1|t}$ the predicted covariance. $\bm{\hat{x}}_{1|0}$ and $\bm{\Sigma}_{1|0}$ are arbitrary initial values. The previous steps lead to the marginal likelihood of the observations.

\subsubsection{The marginal likelihood}

By the chain rule, the likelihood can be factored as the product of the probability of each observation given previous observations, that is
\begin{equation}
    L(\bm{y}_1,\dots,\bm{y}_T)=\prod_{t=1}^{\top} L(\bm{y}_t|\bm{y}_1,\dots,\bm{y}_{t-1})
\end{equation}
where
\begin{equation}
    \bm{y}_t|\bm{y}_1,\dots,\bm{y}_{t-1}\sim \mathcal{N}(\bm{C}\bm{\hat{x}}_{t|t-1}+\bm{Du}_{t}, \bm{P}_t),\quad \forall t\geq 1.
\end{equation}
Finally, after removing constants, minus twice the log-likelihood becomes
\begin{equation}
\label{likelihoodKF}
    l(\bm{y}_1,\dots,\bm{y}_T)=\sum_{t=1}^{\top} l(\bm{y}_t|\bm{y}_1,\dots,\bm{y}_{t-1})=\sum_{t=1}^{\top} \ln(\det(\bm{P}_t)) + \bm{z}_t^{\top}\bm{P}_t^{-1}\bm{z}_t.
\end{equation}
Thanks to the log-likelihood, we can build a suitable exhaustive summary.

\subsubsection{The Kalman Filter exhaustive summary}
First, we need to recall the extension theorem \citep[][Theorem 3]{cole2010determining}. A model represented by $\bm{\kappa}_1(\bm{\theta}_1)$ is extended by adding extra parameters $\bm{\theta}_2$ and new exhaustive summary terms $\bm{\kappa}_2(\bm{\theta}')$, with $\bm{\theta}'= (\bm{\theta}_1,\bm{\theta}_2)$. The extended model’s exhaustive summary is then $\bm{\kappa}(\bm{\theta}')^{\top} = (\bm{\kappa}_1(\bm{\theta}_1)^{\top},\bm{\kappa}_2(\bm{\theta}')^{\top})^{\top}$.  The derivative matrix of the extended model is
\begin{equation}
    \dfrac{\partial \bm{\kappa}}{\partial \bm{\theta'}}=\begin{pmatrix}
    \dfrac{\partial \bm{\kappa}_1}{\partial \bm{\theta}_1} & \dfrac{\partial \bm{\kappa}_2}{\partial \bm{\theta}_1}\\
    \bm{0} & \dfrac{\partial \bm{\kappa}_2}{\partial \bm{\theta}_2}
\end{pmatrix}.
\end{equation}
With these notations, \citet[][Theorem 3]{cole2010determining} show that
\begin{theorem}
    If the original model is full rank (i.e. $\partial \bm{\kappa}_1 / \partial \bm{\theta}_1$ is full rank) and $\partial \bm{\kappa}_2 / \partial \bm{\theta}_2$ is full rank, then the extended model is full rank also.
\end{theorem}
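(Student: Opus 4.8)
The plan is to treat this as a purely linear-algebraic statement about the rank of a block upper-triangular matrix, using the convention fixed earlier in the paper where $\partial\bm{\kappa}/\partial\bm{\theta}$ has rows indexed by parameters and columns indexed by exhaustive-summary components, so that ``full rank'' means ``rank equal to the number of parameters''. Write $p_1$ and $p_2$ for the numbers of components of $\bm{\theta}_1$ and $\bm{\theta}_2$, and $n_1,n_2$ for the numbers of components of $\bm{\kappa}_1$ and $\bm{\kappa}_2$. Then $\partial\bm{\kappa}_1/\partial\bm{\theta}_1$ is $p_1\times n_1$, $\partial\bm{\kappa}_2/\partial\bm{\theta}_2$ is $p_2\times n_2$, and the displayed block matrix $\partial\bm{\kappa}/\partial\bm{\theta}'$ is $(p_1+p_2)\times(n_1+n_2)$. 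The hypotheses read $\operatorname{rank}(\partial\bm{\kappa}_1/\partial\bm{\theta}_1)=p_1$ and $\operatorname{rank}(\partial\bm{\kappa}_2/\partial\bm{\theta}_2)=p_2$; in particular all $p_1$ rows of the first matrix are linearly independent and all $p_2$ rows of the second are linearly independent, which also forces $n_1\ge p_1$ and $n_2\ge p_2$. The goal is $\operatorname{rank}(\partial\bm{\kappa}/\partial\bm{\theta}')=p_1+p_2$.

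First I would record the trivial upper bound: $\partial\bm{\kappa}/\partial\bm{\theta}'$ has exactly $p_1+p_2$ rows, hence rank at most $p_1+p_2$. The core step is then to show its rows are linearly independent. Split them into the ``top'' $p_1$ rows (the $\bm{\theta}_1$-derivatives) and the ``bottom'' $p_2$ rows (the $\bm{\theta}_2$-derivatives); by the block structure a bottom row has the form $(\bm{0},\ r)$ with $r$ a row of $\partial\bm{\kappa}_2/\partial\bm{\theta}_2$, and a top row has the form $(t,\ r')$ with $t$ a row of $\partial\bm{\kappa}_1/\partial\bm{\theta}_1$ and $r'$ the corresponding (irrelevant) row of $\partial\bm{\kappa}_2/\partial\bm{\theta}_1$. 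Take any vanishing linear combination of all $p_1+p_2$ rows. Reading off its first $n_1$ coordinates, the bottom rows contribute nothing there, so what remains is a vanishing linear combination of the rows of $\partial\bm{\kappa}_1/\partial\bm{\theta}_1$; since those are independent, every top coefficient is zero. The combination then reduces to a vanishing combination of the bottom rows, i.e.\ of the rows of $\partial\bm{\kappa}_2/\partial\bm{\theta}_2$ (seen in the last $n_2$ coordinates); by independence of those, every bottom coefficient is zero as well. Hence the $p_1+p_2$ rows are independent, so $\operatorname{rank}(\partial\bm{\kappa}/\partial\bm{\theta}')=p_1+p_2$, which is exactly the parameter count of the extended model, so it is full rank.

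I do not expect a genuine obstacle here: the whole content is the elementary fact that a zero lower-left block decouples the two row-blocks once one projects onto the first, respectively the last, group of columns, and that the off-diagonal block $\partial\bm{\kappa}_2/\partial\bm{\theta}_1$ plays no role. The only things to keep straight are bookkeeping matters — that ``full rank'' means rank equal to the number of \emph{parameters} (rows), not columns, and the implicit dimension constraints $n_1\ge p_1$, $n_2\ge p_2$ that make the upper bound $\min(p_1+p_2,\,n_1+n_2)=p_1+p_2$ consistent. An alternative, slightly slicker phrasing uses the general block-triangular inequality $\operatorname{rank}\!\begin{pmatrix}A & B\\ \bm{0} & C\end{pmatrix}\ge\operatorname{rank}(A)+\operatorname{rank}(C)$ together with the row-count upper bound; but the direct row-independence argument is self-contained, and that is the version I would include.
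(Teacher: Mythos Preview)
Your argument is correct: the block lower-triangular structure (zero in the $(\bm{\theta}_2,\bm{\kappa}_1)$ block) lets you first kill the top coefficients by projecting onto the first $n_1$ columns, then the bottom ones by projecting onto the last $n_2$ columns, yielding row-independence and hence rank $p_1+p_2$. Note, however, that the paper does not supply its own proof of this statement at all; it simply recalls it as the extension theorem of \citet[][Theorem~3]{cole2010determining} and uses it. Your self-contained linear-algebra argument is exactly the standard proof one would give, so there is nothing to compare against beyond observing that you have filled in what the paper only cites.
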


Now, as mentioned by \cite{cole2016parameter}, if a log-likelihood is written as $l(\bm{\theta})=\sum_{n=1}^N l_n(\bm{\theta})$, an exhaustive summary is 
\begin{equation}
\label{LikelihoodES}
    \bm{\kappa}_N(\bm{\theta})=\begin{pmatrix}
        l_1(\bm{\theta})\\
        \vdots\\
        l_N(\bm{\theta})
    \end{pmatrix}.
\end{equation}
Since in practice the number of observations $N$ can be (very) large, we want to reduce the length of $\bm{\kappa}_N$. Let $p$ denotes the number of parameters. Assuming that $N\geq p$, we have
\begin{theorem}
\label{TheoremLikelihood}
    For a model with the log-likelihood of the form $l(\bm{\theta})=\sum_{n=1}^N l_n(\bm{\theta})$, an exhaustive summary is given by
    \begin{equation}
        \bm{\kappa}_p(\bm{\theta})=\begin{pmatrix}
        l_1(\bm{\theta})\\
        \vdots\\
        l_p(\bm{\theta})
        \end{pmatrix}.
    \end{equation}
    Moreover, the model represented by $\bm{\kappa}_N$ is full rank if the model represented by $\bm{\kappa}_{p}$ is full rank.
\end{theorem}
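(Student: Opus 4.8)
The plan is to prove the two assertions of Theorem~\ref{TheoremLikelihood} in sequence. The first assertion --- that $\bm{\kappa}_p(\bm{\theta})$ is an exhaustive summary --- is the heart of the matter; the second --- that $\bm{\kappa}_N$ is full rank whenever $\bm{\kappa}_p$ is --- follows quickly once the first is in hand, using the derivative-matrix characterization of parameter redundancy from \citet[][Theorem 2a]{cole2010determining}.

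For the first assertion, I would start from the fact, recalled just above in Equation~(\ref{LikelihoodES}), that $\bm{\kappa}_N(\bm{\theta}) = (l_1(\bm{\theta}),\dots,l_N(\bm{\theta}))^T$ is already known to be an exhaustive summary. It therefore suffices to show that the shorter vector $\bm{\kappa}_p$ carries the same information, i.e. that $\bm{\kappa}_p(\bm{\theta}_1) = \bm{\kappa}_p(\bm{\theta}_2)$ implies $\bm{\kappa}_N(\bm{\theta}_1) = \bm{\kappa}_N(\bm{\theta}_2)$ (the converse is trivial since $\bm{\kappa}_p$ is a sub-vector of $\bm{\kappa}_N$). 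The key observation is that each $l_n(\bm{\theta})$ for $n > p$ is \emph{functionally dependent} on $l_1(\bm{\theta}),\dots,l_p(\bm{\theta})$: since all $l_n$ are functions of the same $p$-dimensional parameter vector $\bm{\theta}$, the map $\bm{\theta} \mapsto (l_1(\bm{\theta}),\dots,l_p(\bm{\theta}))$ already ``uses up'' all available degrees of freedom generically, so each additional coordinate $l_n$ factors through it (at least locally, on the open set where the relevant Jacobian has locally constant rank). Concretely, I would invoke the rank condition: the derivative matrix $\partial \bm{\kappa}_N / \partial \bm{\theta}$ has at most $p$ columns, hence rank at most $p$; if $\partial \bm{\kappa}_p / \partial \bm{\theta}$ already achieves rank $p$ on an open set, then by the constant-rank / implicit function theorem the later rows $\partial l_n / \partial \bm{\theta}$ lie in the row span of the first $p$, which forces $l_n = \psi_n(l_1,\dots,l_p)$ locally for smooth $\psi_n$. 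Then $\bm{\kappa}_p(\bm{\theta}_1)=\bm{\kappa}_p(\bm{\theta}_2)$ gives $l_n(\bm{\theta}_1)=\psi_n(\bm{\kappa}_p(\bm{\theta}_1))=\psi_n(\bm{\kappa}_p(\bm{\theta}_2))=l_n(\bm{\theta}_2)$ for every $n$, i.e. $\bm{\kappa}_N(\bm{\theta}_1)=\bm{\kappa}_N(\bm{\theta}_2)$, and since $\bm{\kappa}_N$ is exhaustive, so is $\bm{\kappa}_p$.

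For the second assertion I would argue directly on ranks. Write $\partial \bm{\kappa}_N / \partial \bm{\theta}$ as the vertical stacking of $\partial \bm{\kappa}_p/\partial\bm{\theta}$ on top of the remaining rows. Adding rows to a matrix cannot decrease its rank, so $\mathrm{rank}(\partial \bm{\kappa}_N/\partial\bm{\theta}) \geq \mathrm{rank}(\partial \bm{\kappa}_p/\partial\bm{\theta})$; combined with the trivial upper bound $\mathrm{rank}(\partial \bm{\kappa}_N/\partial\bm{\theta}) \leq p$ (only $p$ columns), if $\bm{\kappa}_p$ is full rank ($=p$) then $\bm{\kappa}_N$ is full rank too. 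By \citet[][Theorem 2a]{cole2010determining} this means the model represented by $\bm{\kappa}_N$ is not parameter redundant, as claimed.

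The main obstacle is making the ``functional dependence'' step in the first assertion rigorous rather than merely generic: strictly, $\bm{\kappa}_p$ need not be exhaustive at parameter points where $\partial \bm{\kappa}_p/\partial\bm{\theta}$ drops rank, so the cleanest honest statement is either (i) to phrase the result in the local-identifiability sense on the open set $\Omega$ where the rank is maximal and constant --- which is exactly the setting of the paper's Definition --- and apply the constant-rank theorem there, or (ii) to present the second (rank-based) assertion as the operative content and treat the first as the standard exhaustive-summary-reduction heuristic already used by \citet{cole2016parameter}. I would adopt route (i): restrict to $\Omega$, invoke the constant-rank theorem to get the local factorizations $\psi_n$, and conclude. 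Everything else is bookkeeping with Jacobian row spans.
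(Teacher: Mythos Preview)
Your argument for the second assertion --- that $\mathrm{rank}(\partial\bm{\kappa}_N/\partial\bm{\theta}) \ge \mathrm{rank}(\partial\bm{\kappa}_p/\partial\bm{\theta})$ because appending rows cannot lower the rank --- is exactly the paper's argument, phrased there as an instance of the extension theorem of \citet[][Theorem~3]{cole2010determining} and the remark in \citet[][p.~53]{cole2020parameter}.

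Where you diverge is in the first assertion. The paper does not attempt your constant-rank / implicit-function route to show that each $l_n$ with $n>p$ factors through $(l_1,\dots,l_p)$. Instead it simply observes that $\bm{\kappa}_p$ is, by the very construction in Equation~(\ref{LikelihoodES}), the likelihood-based exhaustive summary for the model restricted to the first $p$ observations (and likewise $\bm{\kappa}_{p+}$ for the remaining ones); it then moves straight to the rank statement. In other words, the paper treats ``$\bm{\kappa}_p$ is an exhaustive summary'' as holding by definition for the truncated-data model, and lets the second assertion carry the practical content. Your approach is more ambitious: you try to establish that $\bm{\kappa}_p$ is exhaustive for the \emph{full} $N$-observation model, which is why you need the functional-dependence machinery and the hedging about open sets where the Jacobian has maximal rank. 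That extra work buys a sharper statement (local exhaustiveness for the full model rather than exhaustiveness for a smaller model), but as you yourself note it only goes through on the generic set where $\partial\bm{\kappa}_p/\partial\bm{\theta}$ already has rank $p$ --- so it is effectively equivalent to the rank assertion you prove anyway. The paper's shortcut is cleaner for the intended use (checking identifiability via the rank test), while your route makes explicit what ``exhaustive for the full model'' would actually require.
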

\begin{proof}
    From Equation (\ref{LikelihoodES}), we know that $\bm{\kappa}_N$ is an exhaustive summary with $N$ elements. We can rewrite $\bm{\kappa}_N$ as $\bm{\kappa}_N^{\top}=(\bm{\kappa}_p^{\top},\bm{\kappa}_{p+}^{\top})^{\top}$ where $\bm{\kappa}_p = (l_1,\dots,l_p)^{\top}$ and $\bm{\kappa}_{p+} = (l_{p+1},\dots,l_N)^{\top}$. 
    
    First, we note that $\bm{\kappa}_p$ and $\bm{\kappa}_{p+}$ are well defined as  exhaustive summaries since they are the exhaustive summaries built by Equation (\ref{LikelihoodES}) for the adequate observations.

    Now, let $\partial \bm{\kappa}_p/\partial \bm{\theta}$ and $\partial \bm{\kappa}_{p+}/\partial \bm{\theta}$ be the derivative matrices of respectively $\bm{\kappa}_p$ and $\bm{\kappa}_{p+}$. The derivative matrix of $\bm{\kappa}_N$ writes
    \begin{equation}
        \dfrac{\partial \bm{\kappa}_{N}}{\partial \bm{\theta}}=
        \left(\begin{matrix}
        \dfrac{\partial \bm{\kappa}_{p}}{\partial \bm{\theta}} & \dfrac{\partial \bm{\kappa}_{p+}}{\partial \bm{\theta}}     
        \end{matrix}\right)\in\mathbb{R}^{p\times N}.
    \end{equation}
    As an application of the extension theorem, \citet[][page 53]{cole2020parameter} remarks that if $\partial \bm{\kappa}_p/\partial \bm{\theta}$ is full rank, then $\partial \bm{\kappa}_N/\partial \bm{\theta}$ is full rank. This concludes the proof.
\end{proof}
In this work, we aim to build an exhaustive summary based on the Kalman filter. Assuming that the number of time points $T$ is greater than the number of parameters $p$, we have
\begin{corollary}
\label{KalmanFilterExhaustiveSummary}
    For stochastic linear-time-invariant discrete-time state-space models, the Kalman Filter exhaustive summary is given by
    \begin{equation}
        \bm{\kappa}_{KF}(\bm{\theta})=\begin{pmatrix}
        l_1(\bm{\theta})\\
        \vdots\\
        l_p(\bm{\theta})
        \end{pmatrix},
    \end{equation}
    where $l_t(\bm{\theta})=\ln(\det(\bm{P}_t))+\bm{z}_t^{\top}\bm{P}_t^{-1}\bm{z}_t$, for all $t=1,\dots,p$.
\end{corollary}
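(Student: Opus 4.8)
The plan is to derive Corollary~\ref{KalmanFilterExhaustiveSummary} as a direct specialization of Theorem~\ref{TheoremLikelihood} once we identify the Kalman-filter log-likelihood with a sum of the required form. First I would recall from Equation~(\ref{likelihoodKF}) that, for the stochastic linear-time-invariant discrete-time SSM under consideration, minus twice the log-likelihood factors as
\begin{equation}
    l(\bm{y}_1,\dots,\bm{y}_T)=\sum_{t=1}^T l_t(\bm{\theta}),\qquad l_t(\bm{\theta})=\ln(\det(\bm{P}_t))+\bm{z}_t^T\bm{P}_t^{-1}\bm{z}_t,
\end{equation}
where the innovations $\bm{z}_t$ and innovation covariances $\bm{P}_t$ are produced by the prediction/filtration recursions. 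This exhibits $l$ in exactly the additive shape $l(\bm{\theta})=\sum_{n=1}^N l_n(\bm{\theta})$ with $N=T$ that is the hypothesis of Theorem~\ref{TheoremLikelihood}, and the chain-rule factorization in Equation~(\ref{LikelihoodES}) shows that the full vector $\bm{\kappa}_T=(l_1,\dots,l_T)^T$ is a genuine exhaustive summary.

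Next I would invoke the standing assumption $T\geq p$, so that Theorem~\ref{TheoremLikelihood} applies with $N=T$. The theorem then yields immediately that the truncated vector $\bm{\kappa}_p(\bm{\theta})=(l_1(\bm{\theta}),\dots,l_p(\bm{\theta}))^T$ is itself an exhaustive summary, and moreover that the model represented by $\bm{\kappa}_T$ is full rank whenever the model represented by $\bm{\kappa}_p$ is full rank --- which via \citet[][Theorem~2]{cole2010determining} is precisely the statement that identifiability can be checked on the first $p$ innovation terms. Substituting the explicit Kalman-filter expression for $l_t$ into this truncated summary gives exactly $\bm{\kappa}_{KF}$ as stated, completing the argument.

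The only real subtlety --- and the step I would be most careful about --- is verifying that each individual term $l_t(\bm{\theta})$ is genuinely a well-defined function of $\bm{\theta}$ alone, i.e.\ that the quantities $\bm{\hat{x}}_{t|t-1}$ and $\bm{\Sigma}_{t|t-1}$ feeding into $\bm{z}_t$ and $\bm{P}_t$ are determined by $\bm{\theta}$ together with the fixed inputs $(\bm{u}_t)$, fixed data $(\bm{y}_t)$, and the chosen initial values $\bm{\hat{x}}_{1|0},\bm{\Sigma}_{1|0}$. This follows by an easy induction on $t$ through the recursions listed in the Kalman-filter algorithm, but it is what legitimises treating $\bm{\kappa}_p$ as a bona fide exhaustive summary in the sense required by Theorem~\ref{TheoremLikelihood}; everything else is a transcription of that theorem. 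I expect no genuine obstacle beyond this bookkeeping, since the heavy lifting (the truncation-preserves-rank argument via the extension theorem) is already done in Theorem~\ref{TheoremLikelihood}.
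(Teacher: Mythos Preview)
Your proposal is correct and follows exactly the paper's own argument: the paper's proof is a single sentence stating that the corollary is a direct application of Theorem~\ref{TheoremLikelihood} to the Kalman-filter log-likelihood in Equation~(\ref{likelihoodKF}). Your additional remark about verifying by induction that each $l_t$ is a well-defined function of $\bm{\theta}$ (given the fixed data, inputs, and initial values) is a reasonable piece of bookkeeping that the paper leaves implicit, but it does not change the route taken.
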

\begin{proof}
    This result is a direct application of Theorem \ref{TheoremLikelihood} when the log-likelihood derives from the Kalman filter as given in Equation (\ref{likelihoodKF}).
\end{proof}

\subsubsection{Application to a frequently used AR(1) model}

In this section, we want to apply Corollary \ref{KalmanFilterExhaustiveSummary} on the AR(1) model with intercept presented by \citet{knape2008estimability} (the same model is presented without intercept in \citealt{AugerMethe2016state}). The model is defined by the two following processes 
\begin{equation}
    x_{t+1}=a+cx_t+\varepsilon_t, \quad \varepsilon_t\sim \mathcal{N}(0,\sigma^2)
\end{equation}
and
\begin{equation}
    y_t=x_t+\eta_t,\quad \eta_t\sim \mathcal{N}(0,\tau^2),
\end{equation}
where $a$ is the intercept, $c$ the autoregressive parameter and $y_t$ the log-observed population abundance. Here, $(u_t)$ is a constant sequence equal to one. The vector of parameters is $\bm{\theta}=(a,c,\sigma^2,\tau^2)$. Taking $\hat{x}_{1|0}=0$ and $\Sigma_{1|0}=1$ as initial values, the exhaustive summary is given by the four first evaluations of the log-likelihood
\begin{align}
    \bm{\kappa}_{KF}(\bm{\theta})&=\begin{pmatrix}
    l(y_1)\\
    l(y_2\mid y_1)\\
    l(y_3\mid y_1,y_2)\\
    l(y_4\mid y_1,y_2,y_3)
\end{pmatrix} \nonumber\\
    &=\begin{pmatrix}
    \ln(1+\tau^2)+\dfrac{y_1^2}{1+\tau^2}\\
    \ln\left(c^2\left(1-\dfrac{1}{1+\tau^2}\right)+\sigma^2+\tau^2\right)+\dfrac{\left(y_2-\dfrac{cy_1}{1+\tau^2}-a\right)^2}{c^2\left(1-\dfrac{1}{1+\tau^2}\right)+\sigma^2+\tau^2}\\
    l(y_3\mid y_1,y_2)\\
    l(y_4\mid y_1,y_2,y_3)
\end{pmatrix},
\end{align}
where $l(y_3\mid y_1,y_2)$ and $l(y_4\mid y_1,y_2,y_3)$ are too long to be written here. The derivative matrix has rank $r=4$, and there are four parameters, this model is not parameter redundant.

\end{document}